\documentclass[final,5p,times,twocolumn]{elsarticle}

\usepackage{amssymb}
\usepackage{amsthm}
\usepackage{multirow}
\usepackage{wrapfig}
\usepackage{ulem}
\usepackage{tabularx}
\usepackage{booktabs}
\usepackage{stfloats}
\usepackage[utf8]{inputenc}
\usepackage{amsmath}
\usepackage{amsfonts}
\usepackage{epsfig}
\usepackage{psfrag}
\usepackage{pstricks}
\usepackage{algorithm}
\usepackage{graphicx}
\usepackage{url}
\newtheorem{theorem}{Theorem}

\newtheorem{proposition}[theorem]{Proposition}

\graphicspath{{./Figures/}}

\biboptions{comma,sort&compress}

\journal{Under Review}

\begin{document}

\begin{frontmatter}

\title{Generative and isoparametric geometric modeling of large-scale and multiscale microstructures}

\author[]{Guoyue Luo}
\author[]{Yuntao Ma}
\author[]{Qiang Zou\corref{cor}}\ead{qiangzou@cad.zju.edu.cn}

\cortext[cor]{Corresponding author.}
\address{State Key Laboratory of CAD$\&$CG, Zhejiang University, Hangzhou, 310058, China}

\begin{abstract}
As additive manufacturing advances toward higher printing resolution and larger build volumes, microstructures can be designed with finer geometric features over larger physical domains. This trend poses a fundamental challenge for geometric modeling: massive geometric details must be represented compactly, while their associations across scales must be maintained consistently. Existing methods cannot scale well to this requirement. Explicit representations suffer from prohibitive memory cost, and implicit representations remain compact only when microstructures admit analytic, periodic, or otherwise concise procedural descriptions. This paper proposes a new geometric modeling method that treats microstructure modeling as an on-demand generative process, rather than requiring the full instantiation of all geometric details. We first develop ExVCC, an extended volumetric Catmull–Clark spline representation that enables local spline refinement to go beyond tensor-product topology. Built on ExVCC, we introduce new shape-coding schemes and refinement rules that compactly encode large-scale geometric details and enable their localized evaluation through on-demand hierarchical refinement. To model geometric details across scales, we further propose an isoparametric representation in which details across scales are defined over a shared parametric domain using the same family of spline bases of ExVCC. This formulation turns the ExVCC's spline refinement hierarchy into a common framework for geometry encoding, on-demand generation, and cross-scale association, allowing geometric modifications to propagate automatically across scales. The effectiveness of the proposed method is demonstrated through a series of examples and comparisons.
\end{abstract}

\begin{keyword}
 Computer-Aided Design \sep Microstructure Modeling \sep On-demand Generation \sep Automatic Change Propagation \sep Volumetric Subdivision
\end{keyword}

\end{frontmatter}

\section{Introduction}
\label{sec:intro}
Additive manufacturing (AM) has advanced rapidly in recent years, making it practical to fabricate products with intricate interior microstructures~\cite{2020_lattice-review}. Such products are desirable in various applications, such as aerospace engineering, biomedical implants, and thermal management, because they can offer superior mechanical and functional properties, including lightweight performance, energy absorption, and multifunctional responses~\cite{2025_Multi-Physical-Lattice-Metamaterials,2021_property-tailoring}. These properties depend strongly on the geometric configuration of microstructures, including their topology, feature size, and spatial gradation. Geometric modeling thus plays a fundamental role in the design and manufacturing of microstructures~\cite{2025_Luo_review}.

As AM continues to improve in both printing resolution and build volume~\cite{2025_micro-nano-devices}, microstructures can be designed with finer geometric features over larger design domains. This trend poses new challenges for geometric modeling. As smaller features become fabricable, the number of geometric primitives in a fixed-volume part increases significantly. For example, the DARPA 2020 challenge~\cite{DARPA2020} reported that explicitly instantiating a $1\mathrm{m}^3$ design domain with interconnected $0.1\mathrm{mm}$ micro-trusses would require more than $100$~TB of memory. Moreover, higher resolution brings geometric details at multiple length scales into the same design domain, where they are coupled rather than existing as independent structures. Therefore, further development in geometric modeling is needed to compactly represent massive fine-scale geometric details while consistently maintaining their cross-scale associations.

Conventional geometric modeling methods, however, cannot scale well to these challenges. Explicit modeling methods, such as boundary representations (B-rep) and meshes~\cite{zou2019push,zou2023variational}, require full instantiation of all geometric details of interest, and therefore scale poorly to large-scale microstructure models~\cite{meta-meshing}. Implicit methods can be more compact when microstructures are described by concise functions or procedural rules~\cite{STL-free,nTopology,li2023xvoxel}. However, for highly irregular or locally varying microstructures, the underlying field often requires dense sampling, which can again lead to prohibitive memory costs. In addition, current methods often rely on post-processing or manual updates to maintain consistency across scales rather than automatic propagation, which is error-prone.

Recent research studies are seeking to improve scalability by avoiding full geometric instantiation. Compact representations based on programmed lattices, trans-similar structures, and procedural descriptions have been proposed to efficiently encode repeated or self-similar microstructures~\cite{2019_Programmed-Lattice,2022_Rossignac_Tran-Similar,2020_procedural_COTS}. However, their reliance on cell similarity or predefined rules restricts their applicability to general microstructures. Spline approaches have also been used to model cross-scale associations in microstructures~\cite{2023_Lin_freeform-multi-level,2018_elber_hierarchical,elber2023review}, but their reliance on tensor-product topology makes them less suitable for complex microstructures. Thus, effective geometric modeling of large-scale and multiscale microstructures remains an open problem.

To address this problem, this paper presents an generative~\footnote{Note that the generative here is different from that used in the AI field~\cite{zhang2025diffusion}.} and isoparametric modeling method for large-scale and multiscale microstructures. Instead of storing fully instantiated geometry, the method encodes a microstructure as a generative representation embedded in a volumetric parametric domain, so that fine geometry is generated only in queried regions. Besides, inspired by the isoparametric principle in IGA~\cite{hughes2005isogeometric}, we represent the design domain and microstructural geometries at different length scales over a shared family of hierarchical spline bases. As such, the continuity of spline bases helps maintain geometric continuity across adjacent microstructure regions, and their nestedness, expressed through the two-scale refinement relation~\cite{2016_spline-survey}, establishes cross-scale associations. Consequently, modifications can be propagated consistently both across adjacent regions and across scales, by construction rather than through post-processing. It should be noted that hierarchical splines are not new in microstructure modeling~\cite{2023_Lin_freeform-multi-level,elber2023review}, but they are mainly used for refinement or field approximation; by contrast, we primarily exploit the two-scale refinement relation to model cross-scale associations.

To make the proposed method applicable to complex design domains, we further develop an extended version of volumetric Catmull-Clark spline, termed ExVCC. Conventional hierarchical splines, such as the THB-splines used in existing microstructure modeling methods~\cite{2023_Lin_freeform-multi-level}, support local refinement but are limited to simple design domains with tensor-product topology. Volumetric Catmull-Clark splines, in contrast, can represent non-tensor-product volumetric domains~\cite{2020_Interpolatory-CC,2018_direct-limit-volumes}, but they do not directly provide the locally refinable hierarchical basis required by our generative modeling framework. ExVCC bridges this gap by extending volumetric Catmull-Clark splines into a locally refinable hierarchical form. As such, the subdivision in ExVCC enables on-demand generation of microstructure geometry, while the refinability establishes consistent cross-scale associations, even in volumetrically complex design domains.

The main contributions of this paper can be summarized as follows:
\begin{itemize}
    \item A new geometric modeling framework for large-scale and multiscale microstructures, enabling scalable on-demand geometry generation and automatic maintenance of geometric consistency across scales.
    \item A cross-scale associative modeling method based on the isoparametric principle, where spline continuity maintains same-scale geometric continuity and two-scale refinability establishes automatic associations across consecutive scales.
    \item The ExVCC method, an extended volumetric Catmull--Clark spline method that supports hierarchical refinement over non-tensor-product volumetric domains.
\end{itemize}

The remainder of this paper is organized as follows. Section~\ref{sec:related} reviews related work. Sections~\ref {sec:framework} and~\ref{sec:implementation} present the modeling framework and implementations, respectively. Section~\ref{sec:results} presents examples and comparisons, followed by conclusions in Section~\ref{sec:conclusion}.

 \section{Related Work}
\label{sec:related}

\subsection{Geometric Modeling of Large-Scale Microstructures}
\label{sec:related_large_scale}

Large-scale microstructure modeling often induces an explosion of geometric data if explicit methods like B-rep or meshes are used~\cite{soap-film-inspired,quasi-optimal-shape-design,TPMS2STEP}, challenging both representation and downstream processing. Prior work has tackled this issue from two complementary perspectives: how microstructures are represented, and how they are evaluated for downstream tasks. 

On the representation side, various compact encodings have been proposed to avoid explicit storage of massive geometric detail. Program-based representations~\cite{2019_Programmed-Lattice} can compactly represent large-scale microstructures, but they typically assume periodic arrangements. Function-based representations (e.g., TPMS)~\cite{2022_efficient-TPMS,2023_Implicit-conforming,1995_Pasko_F-rep} model geometry implicitly with low memory consumption,  yet their expressiveness is limited, as many microstructures cannot be faithfully captured by such functional forms.

On the evaluation side, scalable pipelines have been developed for handling large geometric data. Out-of-core approaches~\cite{meta-meshing,2021_memory-efficient} reduce peak memory via chunk-based processing, but still require full instantiation of detailed geometry, leaving total storage and computational costs largely unchanged. By contrast, on-demand evaluation~\cite{2022_Rossignac_Tran-Similar,2020_procedural_COTS} performs local, query-driven evaluation without global instantiation, reducing peak memory and avoiding unnecessary computation on irrelevant regions. However, it relies on structured assumptions (e.g., cell similarity) for fast localization, which may limit design space.

\subsection{Geometric Modeling of Multiscale Microstructures}

Multiscale microstructures offer increased flexibility for property tailoring by incorporating geometric details at multiple scales. The geometric modeling of multiscale microstructures has been explored through two main paradigms: explicit refinement-based and implicit refinement-based approaches.

Explicit refinement-based methods construct microstructures by placing microstructure cells directly in the macroshape’s physical domain and introducing multiscale detail through explicit geometric/topological refinement of these cells, e.g., cell insertion~\cite{2024_fractal_honeycomb,2016_vertex_hierarchical_honeycombs} or cell splitting~\cite{2016_rhombic_infill,2018_adaptive-quadtree,2023_TO_self-supporting,2022_Load-Adaptive-Design_cell-split}. They are effective for initial microstructure generation but typically provide limited support for consistent multiscale editing, because cross-scale correspondences are not intrinsically encoded in the representation. For instance, modifications to the macroshape often do not automatically propagate to the instantiated microstructure geometry.

In contrast, implicit refinement-based approaches define microstructures functionally, with multiscale characteristics introduced via hierarchical basis refinement~\cite{2023_Lin_freeform-multi-level} or function composition~\cite{2013_Yoo_hierarchical-TPMS,2021_TPMS-boolean_perodic,2018_elber_hierarchical,2019_elber_Micro-Tiles,Persistent-homology-driven-TPMS}. While effective for multiscale design, consistent cross-scale modification propagation is typically achieved only when a nested refinement structure is available (e.g., THB-splines), so that coarse-to-fine relationships are explicitly defined. Moreover, existing nested-refinable formulations~\cite{2023_Lin_freeform-multi-level} for microstructure modeling rely on tensor-product parametric domains, which limit their applicability to geometrically complex volumetric shapes with nontrivial topology. In contrast, the proposed ExVCC splines in this work accommodate control meshes with extraordinary vertices, allowing them to handle complex volumetric topologies and greatly expanding the design space of microstructures.

\section{The Proposed Modeling Framework}
\label{sec:framework}

To address the modeling challenge posed by large-scale microstructures, we use an on-demand generation strategy, in which geometric details are generated only when and where they are required. This strategy is implemented through hierarchical splines, which organize the control mesh into a coarse-to-fine hierarchy and allow refinement to be applied only to selected cells (see Fig.~\ref{fig:hierarchical}). We begin with a coarse control mesh, which provides a compact coarse representation over the entire design domain. When additional detail is required, we refine only the selected cells, forming a nested hierarchy of progressively finer cells while leaving the rest of the domain unchanged. The fine-scale geometry is then generated on demand, only within this refined region. To handle volumetrically complex shapes that cannot be easily parameterized by tensor-product splines, we realize our framework using volumetric Catmull–Clark subdivision splines~\cite{2018_direct-limit-volumes}. This representation readily accommodates nontrivial topology, making it well-suited for microstructure modeling in complex volumetric domains.

With hierarchical splines, we also realize multiscale associations that enable automatic change propagation across scales and maintain connectivity among adjacent cells. To achieve multiscale modeling, we represent the macro-shape and the microstructure by the same spline basis over the same parametric domain. This shared parameterization establishes a direct association between macro-scale deformation and micro-scale evaluation, so the microstructure follows the macro-shape and remains conformal under shape changes. Across scales, the microstructure at each refinement level is built from the same spline basis family, and the hierarchy records parent–child refinement relations. These relations provide the associations needed to propagate geometric modifications throughout the hierarchy. At a certain scale, connectivity across neighboring cells is enforced at the representation level, because both sides evaluate the shared interface with the same basis functions and coefficients, ensuring the boundary geometry coincides exactly.

\begin{figure}[t]
\centering
\includegraphics[width=0.43\textwidth]{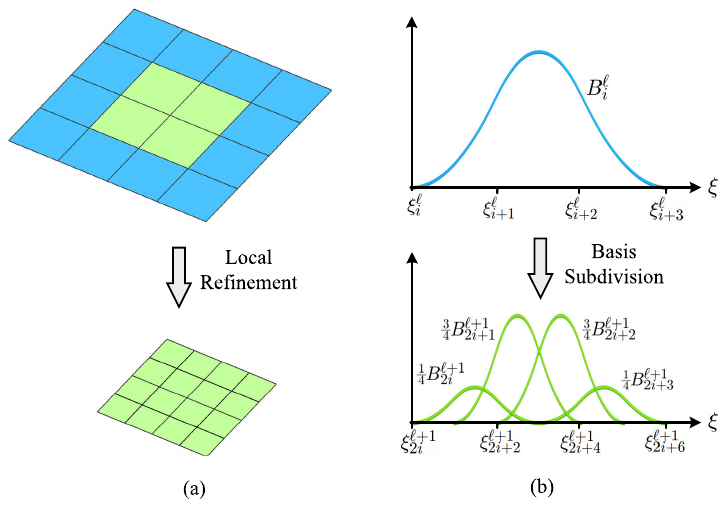}
\caption{The illustration of hierarchical spline representation: (a) the local refinement; and (b) the linear refinability (within knot span $[\xi_i^\ell,\xi_{i+3}^\ell]$, a quadratic B-spline basis $B_i^\ell=\frac{1}{4}B_{2i}^{\ell+1}+\frac{3}{4}B_{2i+1}^{\ell+1}+\frac{3}{4}B_{2i+2}^{\ell+1}+\frac{1}{4}B_{2i+3}^{\ell+1}$).}
\label{fig:hierarchical}
\end{figure}

\begin{figure}[t]
\centering
\includegraphics[width=0.48\textwidth]{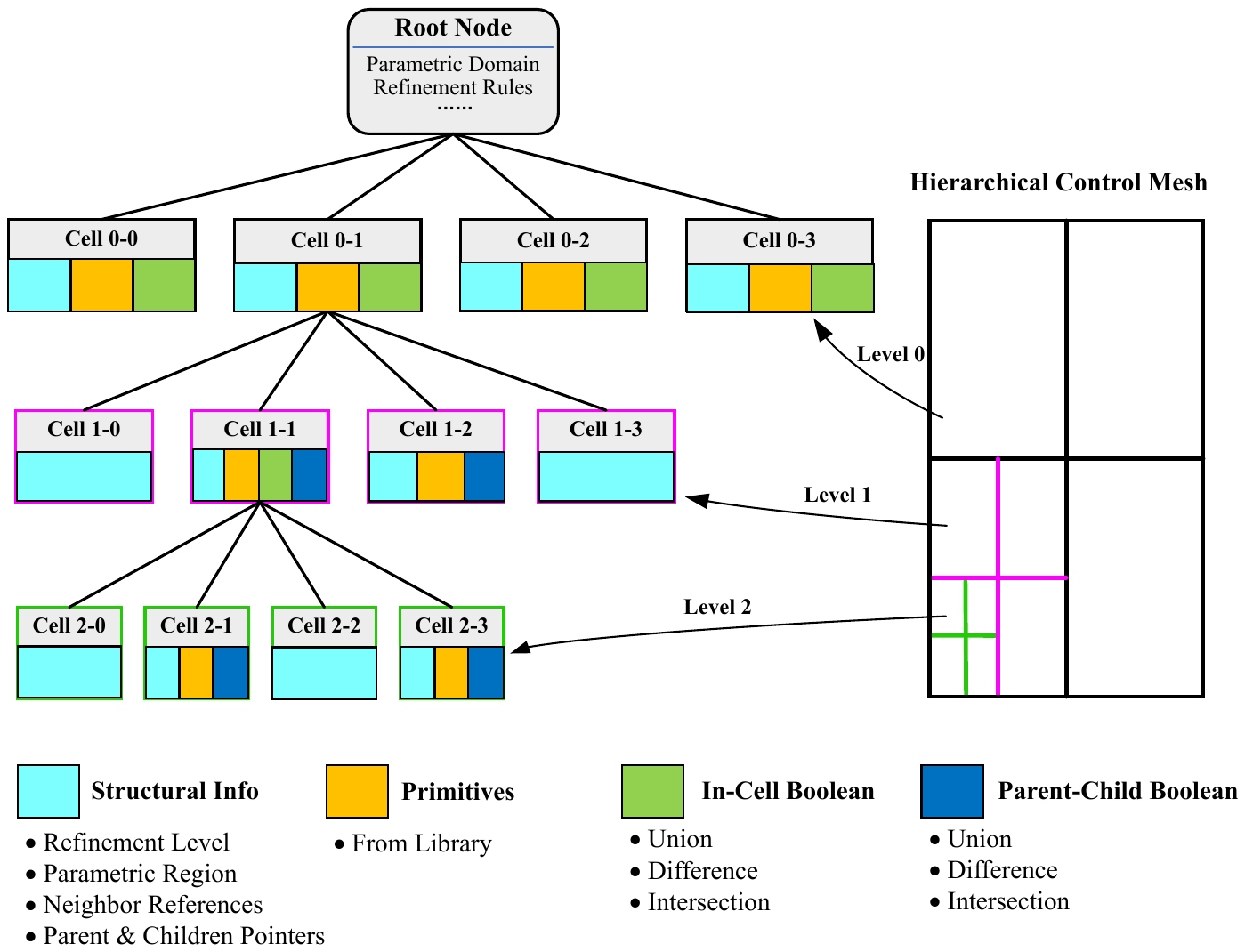}
\caption{The illustration of tree-structured representation over hierarchical control mesh.}
\label{fig:tree}
\end{figure}

Built upon the above ideas, we organize the modeling workflow as a tree-structured representation over the hierarchical control mesh (see Fig.~\ref{fig:tree}). The tree is anchored by a virtual root node that stores global context (e.g., the parametric domain and the refinement rules). Its children correspond to the cells of the initial coarse control mesh. Each subsequent node is associated with exactly one cell at a specific refinement level. When a cell is refined, the node associated with that cell becomes an internal node and spawns child nodes that correspond to the refined subcells whose parametric regions partition the parent cell. Each node stores the structural information, including: (1) the refinement level; (2) the parametric region covered by the cell; (3) pointers to its parent and children; and (4) neighbor references to face-sharing cells at the same level. The node also carries the modeling content, including: (1) microstructure primitives; (2) composition operations within cells; and (3) composition operations across parent–child nodes. This tree structure primarily stores the rules and necessary information, rather than the final geometry itself. This makes it an inherently scalable representation.

With this tree structure, microstructure evaluation can be formulated as a recursive computation on the tree. With a refinement depth given, we traverse the hierarchy to collect the active leaves at that depth. Each node stores a node-local implicit field definition given by its primitive expression and an associated Boolean operation that combines this local field with the inherited field from its parent. The field of a child is obtained by Boolean-combining the parent field with the child’s node-local field, and the field on an active leaf is computed by applying this recursion along the unique root-to-leaf path. We then extract the corresponding isosurface within each active leaf cell and map it to physical space under the shared macro parameterization, which enables the geometric association between microstructure and macro-shape.

The same root-to-leaf recursion used for evaluation makes edit propagation automatic. When a coarse node is modified, all leaves in its subtree are updated by re-evaluating the recursion under the revised node definition. When a fine-level node is modified, the update stays confined to the corresponding local subtree, since only leaves whose paths pass through the edited node need to be recomputed. In addition, connectivity is maintained by construction at a fixed scale because neighboring leaves evaluate their shared interface using the same spline basis functions and coefficients. The implicit field therefore agrees on both sides of the interface, and the extracted boundary geometry matches exactly across cell boundaries.

We next introduce the primitive library and the Boolean composition operations used to build node-local fields within a cell and to combine them across refinement levels.

\textbf{Microstructure primitives.}
Microstructure primitives are the basic building blocks from which the overall microstructure is constructed. Our library includes several representative primitives (see Table~\ref{tab:primitive_overview}), such as cylinders, spheres, and ellipsoids. Each primitive is represented as an implicit scalar field over a cell’s parametric domain, expressed by basis functions and coefficients. Within the same cell, all primitives share the same spline basis and differ only in their coefficient vectors.

\begin{table*}[t]
\centering
\caption{Typical primitives supported by our library.}
\label{tab:primitive_overview}
\begin{tabularx}{0.83\textwidth}{c c c c}
\hline
No. & Primitive & Definition & Model \\
\hline
\#1 & Plate & Knots, Basis Functions, Coefficients & \raisebox{-0.5\height}{\centering \includegraphics[width=1.2cm]{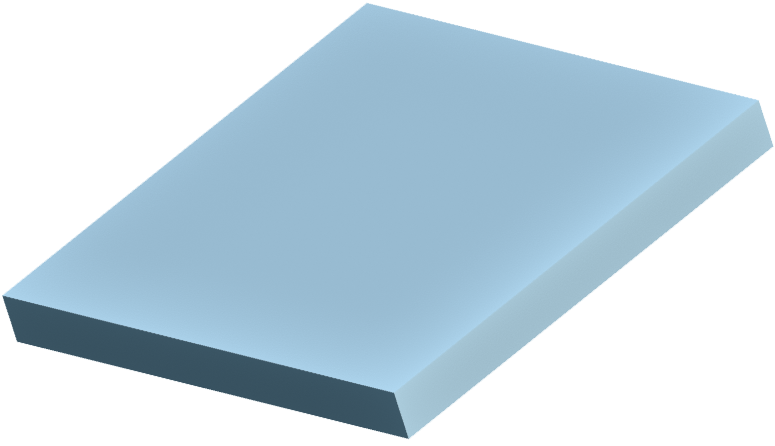}} \\
\#2 & Cylinder & Knots, Basis Functions, Coefficients & \raisebox{-0.5\height}{\centering \includegraphics[width=1.cm]{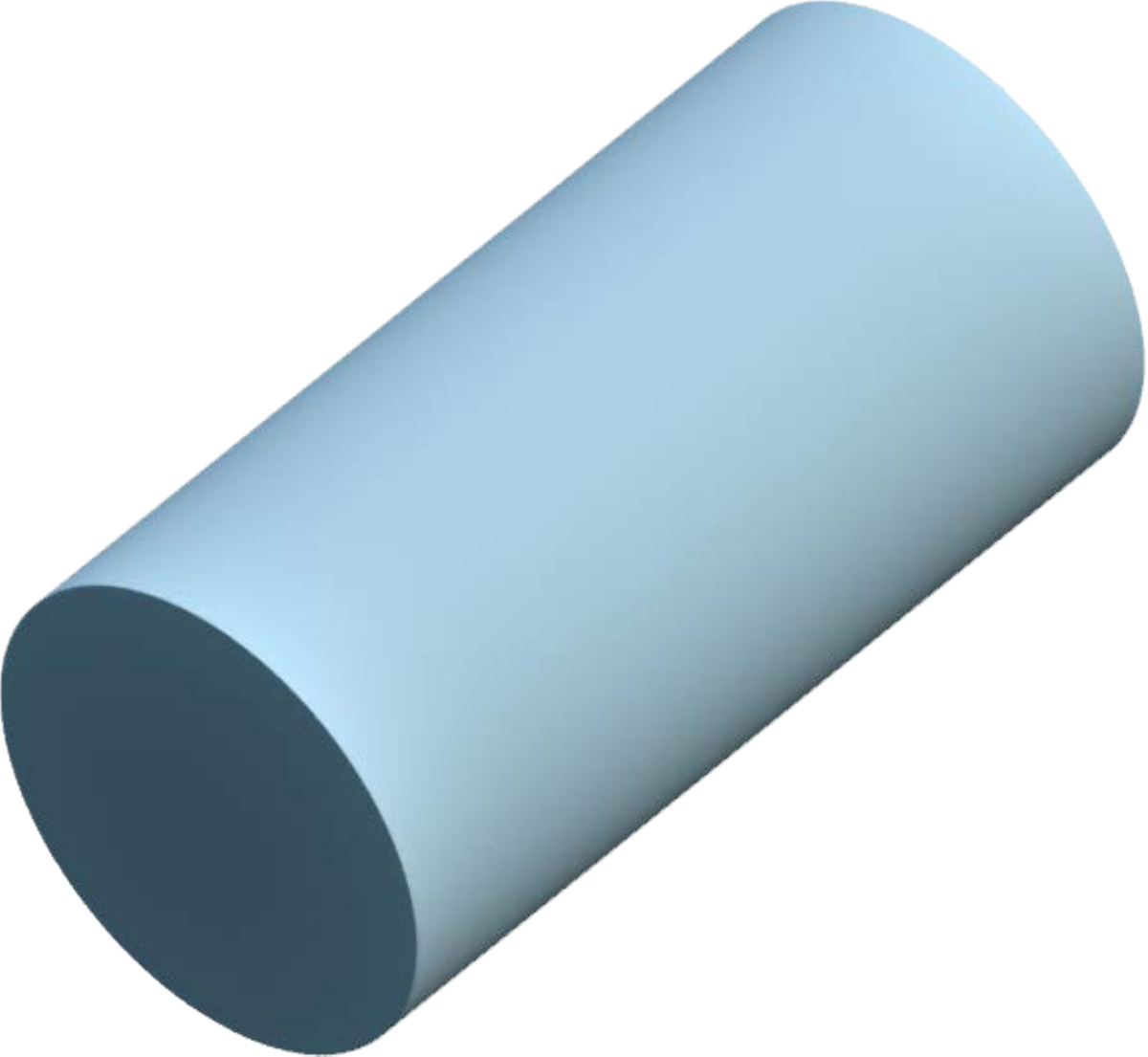}} \\
\#3 & Sphere & Knots, Basis Functions, Coefficients & \raisebox{-0.5\height}{\centering \includegraphics[width=1.cm]{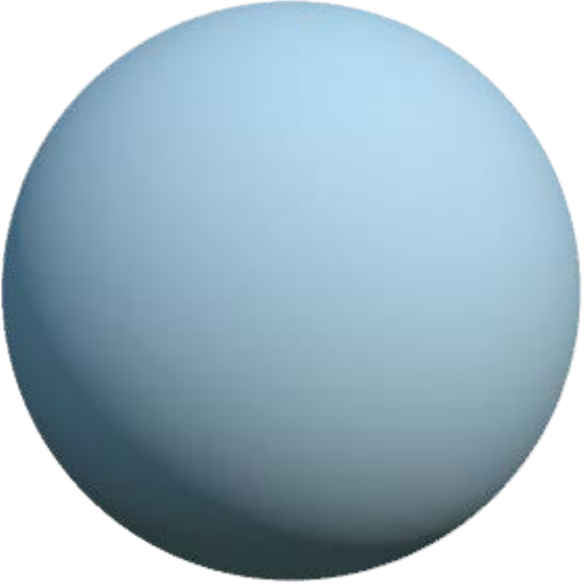}} \\
\#4 & Ellipsoid & Knots, Basis Functions, Coefficients & \raisebox{-0.5\height}{\centering \includegraphics[width=1.cm]{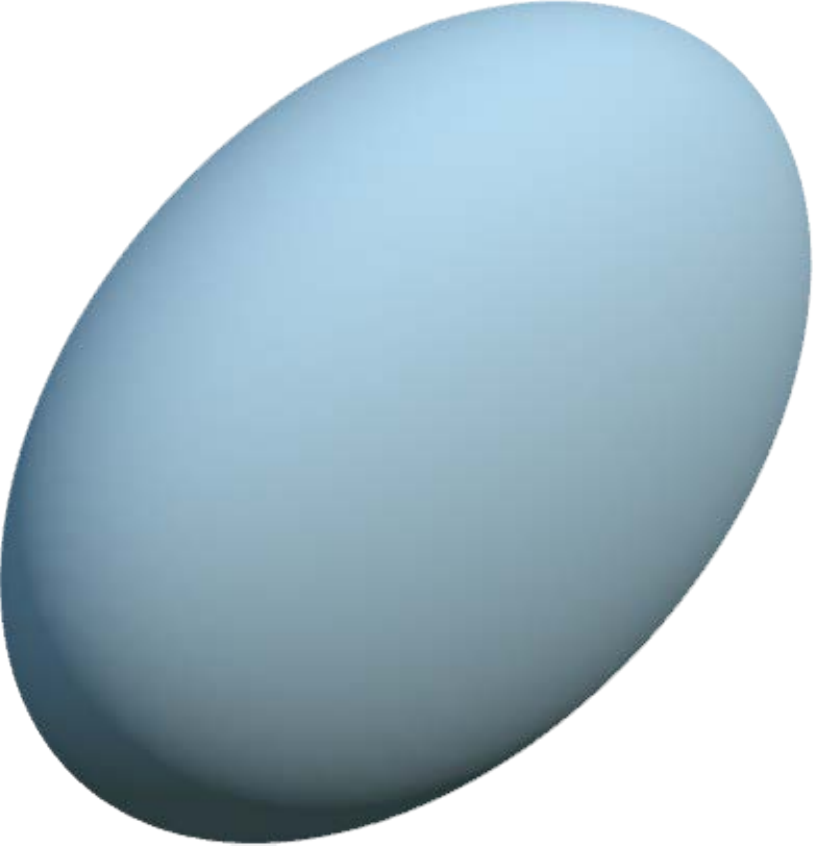}}  \\
\#5 & Shell Unit 1 & Plate, Transformation, Boolean Operation &\raisebox{-0.5\height}{\centering \includegraphics[width=1.2cm]{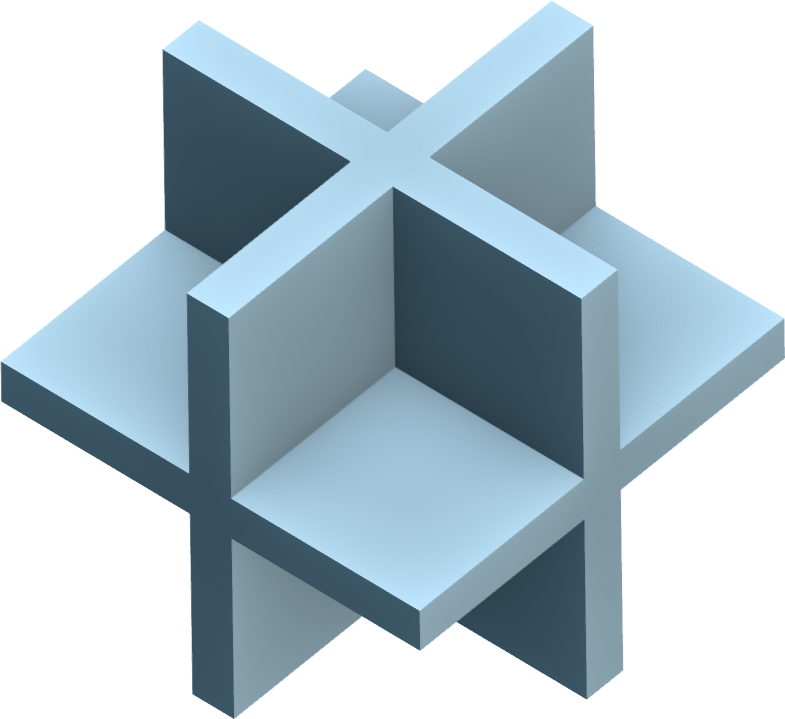}}  \\
\#6 & Shell Unit 2 & Plate, Deformation, Transformation, Boolean Operation& \raisebox{-0.5\height}{\centering \includegraphics[width=1.cm]{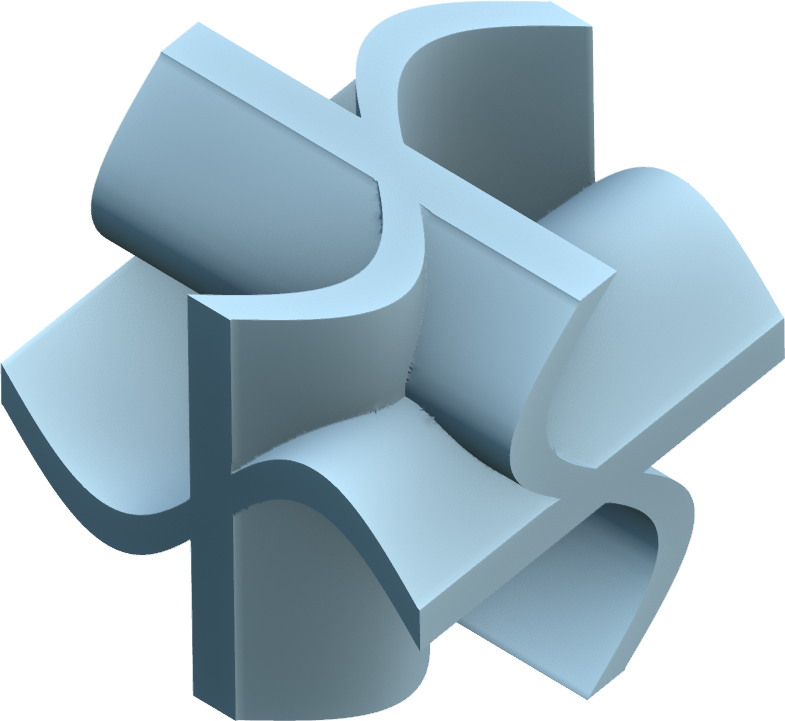}} \\
\#7 & Shell Unit 3 & Sphere, Strut Unit, Transformation, Boolean Operation& \raisebox{-0.5\height}{\centering \includegraphics[width=0.94cm]{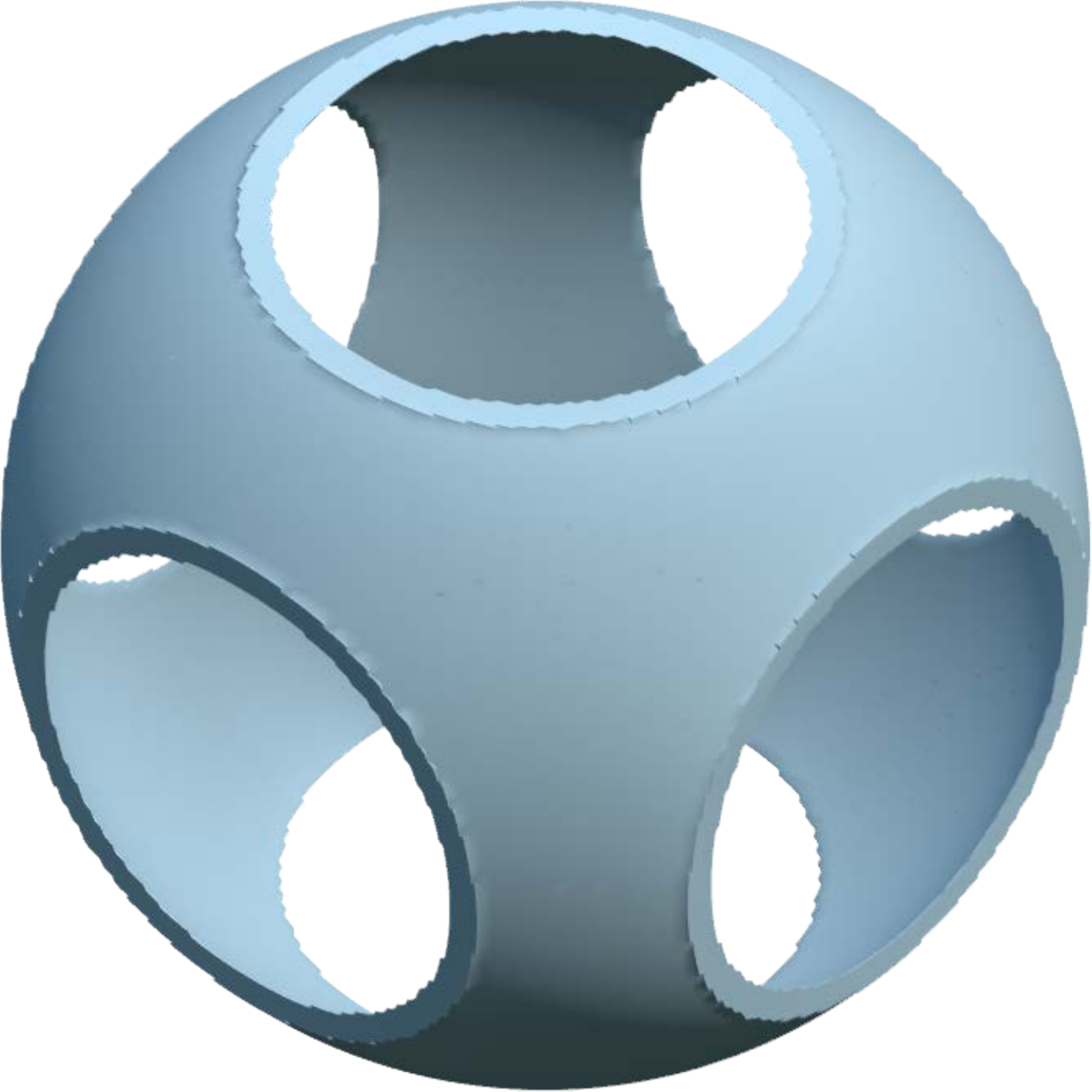}} \\
\#8 & Strut Unit 1 & Cylinder, Transformation, Boolean Operation &\raisebox{-0.5\height}{\centering \includegraphics[width=1.cm]{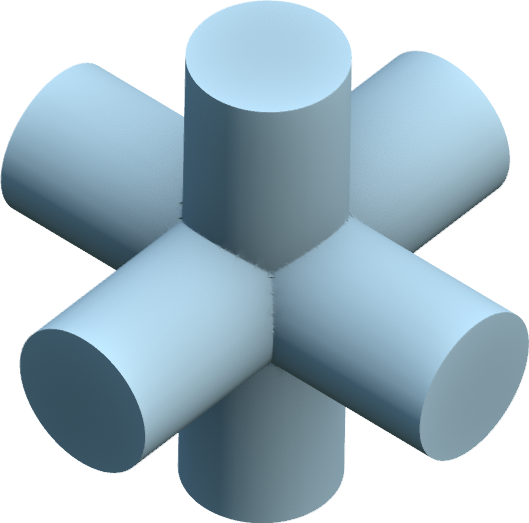}}  \\
\#9 & Strut Unit 2 & Cylinder, Deformation, Transformation, Boolean Operation &\raisebox{-0.5\height}{\centering \includegraphics[width=1.1cm]{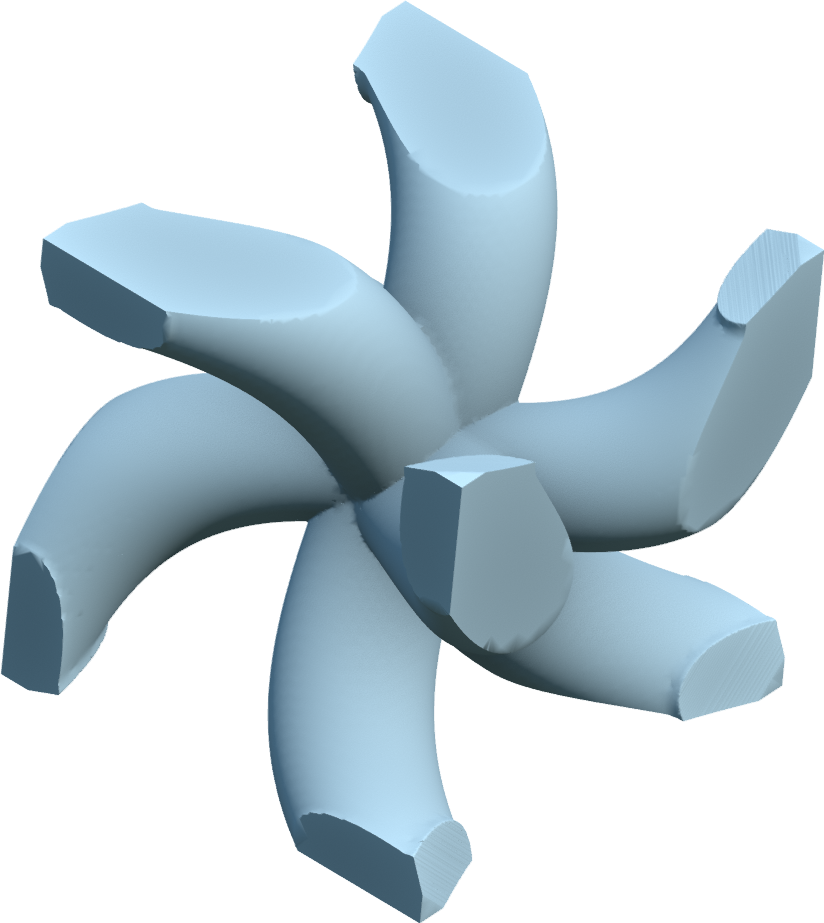}}  \\
\#10 & Solid Unit 1 & Cube, Cylinder, Transformation, Boolean Operation &\raisebox{-0.5\height}{\centering \includegraphics[width=1.cm]{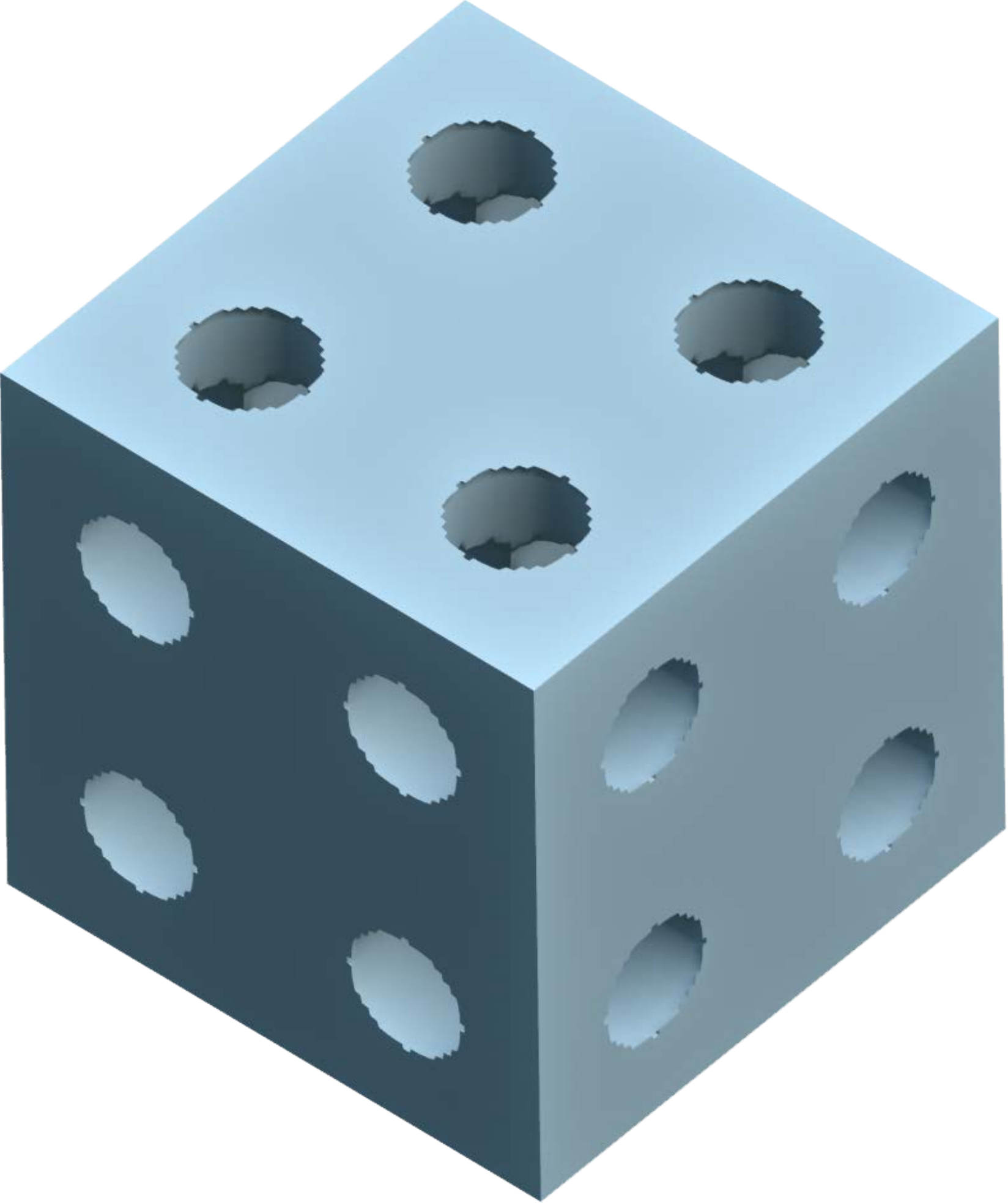}}  \\
\#11 & Solid Unit 2 & Cube, Ellipsoid, Cylinder, Sphere, Transformation, Boolean Operation &\raisebox{-0.5\height}{\centering \includegraphics[width=1.cm]{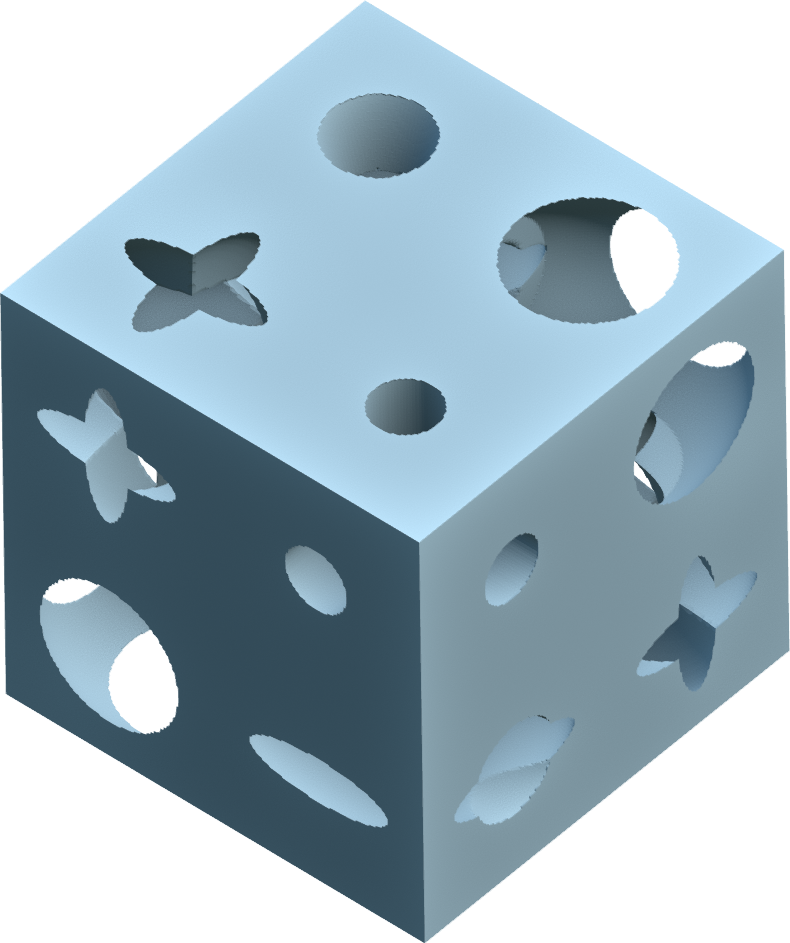}}  \\
\hline
\end{tabularx}
\end{table*}

\textbf{Primitive combinations.}
Two classes of primitive combinations are supported. The first operates within a cell, where multiple primitives defined on the same cell are composed through Boolean operations (union, intersection, and difference) to form a more complex single-scale microstructure.

The second class operates across refinement levels, where primitives associated with a parent cell are combined with those defined in its child cells. This cross-scale composition enables multiscale primitives and is realized through three typical operations:
\begin{itemize}
    \item \textbf{Geometry-preserving refinement:} increasing representation resolution and degree of freedom without changing geometry, and the additional degrees of freedom enable the introduction of freeform details by adjusting the corresponding coefficients;
    \item \textbf{Copy-and-scale variation:} replicating parent-level patterns into child cells with natural scaling;
    \item \textbf{Cross-scale Boolean composition:} combining parent- and child-level fields so that fine-scale features can add to or carve from coarse-scale geometry.
\end{itemize}

It should be noted that both the primitive library and the composition operations are designed to be extensible. While the examples in this paper focus on a representative subset, additional primitives and richer field composition operators can be incorporated within the same framework to further broaden the design space for multiscale microstructures.

\section{Implementation via Volumetric Subdivision Splines }
\label{sec:implementation}
This section presents the implementation of our modeling framework using volumetric Catmull--Clark (VCC) subdivision splines~\cite{2018_direct-limit-volumes}. To support generative and iso-parametric modeling, the VCC basis is extended to a locally refinable hierarchical form, namely ExVCC. Based on this hierarchical spline representation, multiscale microstructures are constructed and evaluated within the proposed framework.

\subsection{Introduction to VCC}
VCC subdivision is a well-established geometric modeling method, and here we provide a brief summary of it. VCC subdivision defines a refinement operator on a volumetric control mesh, producing a nested sequence of refined meshes that converges to a smooth volumetric limit geometry~\cite{2020_Interpolatory-CC}. At each subdivision step, a coarse cell is split (in the parametric sense) into eight child sub-cells, yielding a finer volumetric layout together with an updated set of control vertices governed by the Catmull--Clark subdivision rules.

This refinement process induces a spline representation by associating each cell with a local parametric domain (an element) and expressing the limit geometry over that element as a linear combination of control vertices weighted by VCC basis functions~\cite{2018_direct-limit-volumes}. For regular elements (Fig.~\ref{fig:element-type}(a)), the basis reduces to standard tricubic B-spline basis functions. For irregular elements containing extraordinary vertices (EVs; valence $\neq 6$) and/or extraordinary edges (EEs; valence $\neq 4$) (Fig.~\ref{fig:element-type}(b)), the basis functions are defined by the subdivision construction, analogous to Stam’s framework~\cite{1998_Stam_evaluation}. The local neighborhood is subdivided until the query point lies in a non-irregular subelement, and then the basis is expressed as~\cite{2018_direct-limit-volumes}:
\begin{equation}
\mathbf{B}(u,v,w)
=
(\mathbf{V}^{-1})^{T}
\mathbf{\Lambda}^{n-1}
\left(\mathbf{P}_{k}\bar{\mathbf{A}}\mathbf{V}\right)^{T}
\mathbf{N}(u,v,w)
\end{equation}
where $\bar{\mathbf{A}}$ is the extended subdivision matrix, $\mathbf{P}_k$ is the volumetric picking matrix ($k=0,\ldots,6$), and $(\mathbf{\Lambda},\mathbf{V})$ is the eigenstructure of the subdivision operator (Please refer to~\cite{2018_direct-limit-volumes} for the construction procedures of these matrices). The volumetric basis $\mathbf{N}(u,v,w)$ is constructed as a tensor product of the 2D Catmull-Clark basis and a regular cubic B-spline basis. It should be noted that a local subdivision step in the neighborhood of an EV produces eight children that are not uniformly regular; as shown in Fig.~\ref{fig:local_subdiv}, four become regular away from EEs ($k\in\{0,1,2,4\}$), three form layered configurations associated with EEs ($k\in\{3,5,6\}$), and one remains irregular ($k=7$).

VCC subdivision splines can represent both macro-shapes and microstructures in a unified volumetric spline form. However, the direct application of VCC to the multiscale modeling could cause inconsistencies at the coarse-fine interfaces because it lacks native support for local refinement, and to the best of our knowledge, no publicly available work has reported local refinement for VCC. Therefore, we develop a locally refinable hierarchical extension of VCC (ExVCC) to model multiscale microstructures. Unlike conventional THB-splines, which are constrained by a tensor-product topology, ExVCC is defined over control meshes with extraordinary vertices, thereby enabling the representation of complex volumetric topologies and significantly expanding the design space for multiscale structures.

\begin{figure}[t]
\centering
\includegraphics[width=0.49\textwidth]{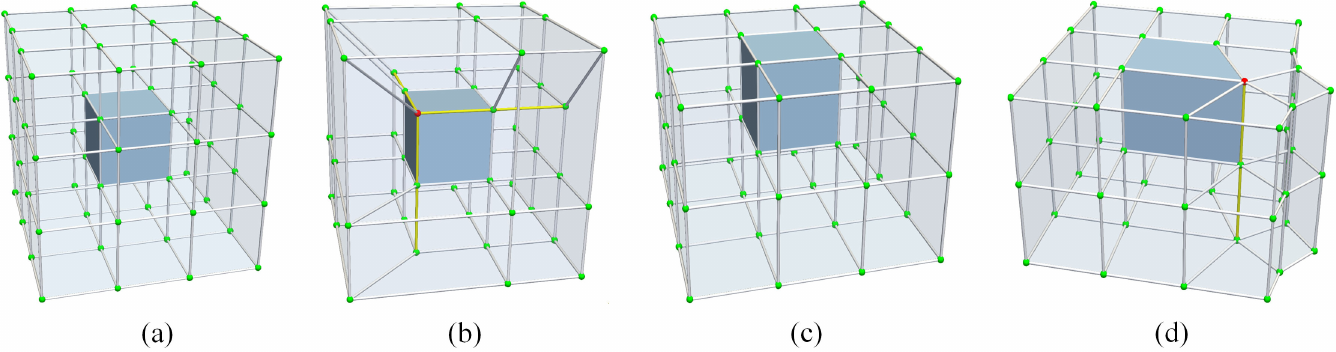}
\caption{Elements categorized by local configurations (EVs in red; EEs in yellow): (a)(c) regular elements and (b)(d) irregular elements.}
\label{fig:element-type}
\end{figure}

\begin{figure}[t]
\centering
\includegraphics[width=0.38\textwidth]{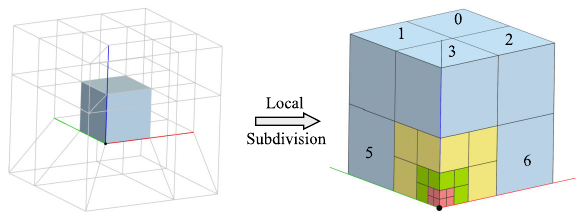}
\caption{The local subdivision of an irregular element.}
\label{fig:local_subdiv}
\end{figure}

\subsection{Extension of VCC}
\label{sec:VTHCC}
The ExVCC basis construction recursively follows three steps: (1) determining active level-$(\ell+1)$ functions over a refined domain; (2) truncating the affected level-$\ell$ functions with respect to that domain; and (3) collecting all resulting functions to form the hierarchical basis.

\begin{figure*}[t]
\centering
\includegraphics[width=0.8\textwidth]{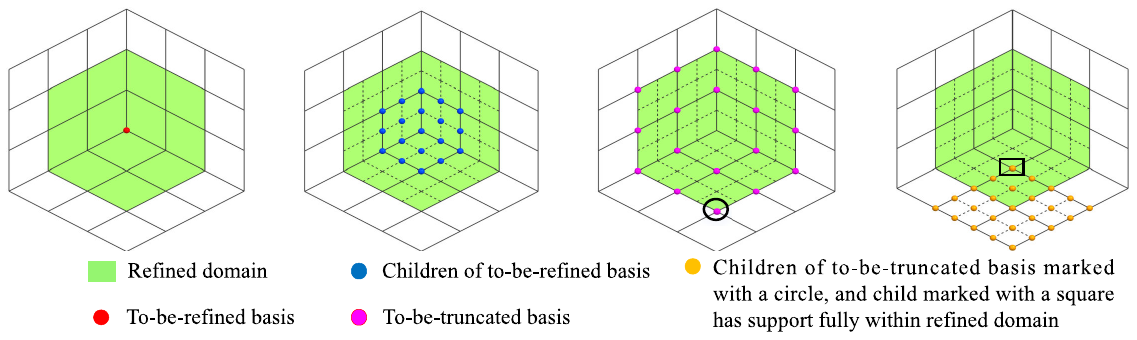}
\caption{The illustration of the construction of ExVCC basis functions.}
\label{fig:THCC_construct}
\end{figure*}

\textbf{Determination of active level-$(\ell+1)$ functions.}
Consider two consecutive levels, level $\ell$ with basis functions $\mathbf{B}^{\ell}$ and level $(\ell+1)$.
We start by specifying the to-be-refined basis functions $\mathbf{B}^\ell_r$ and level-$\ell$ domain $\Omega^{\ell}$ as the union of supports of level-$\ell$ basis functions $\mathbf{B}^\ell_r$ (see Fig.~\ref{fig:THCC_construct}),
denoted by 
\begin{equation}
\Omega^{\ell} := \bigcup_{B_i^{\ell}\in \mathbf{B}^{\ell}_{r}}\operatorname{supp}(B_i^{\ell})
\end{equation}

In the tricubic setting, each basis function has support spanning its two-ring neighborhood elements. $\Omega^{\ell}$ is typically composed of several local domains $\{\Omega^{\ell}_k\}$.
Each local domain $\Omega^{\ell}_k$ is then refined by subdividing into eight subdomains. These subdomains constitute the parametric domains of the corresponding level-$(\ell+1)$ volumetric elements. The refined domain at level $\ell+1$ is then defined as the union of all subdomains generated from the refinement,
which we denote by $\Omega^{\ell+1}$ and is identical to $\Omega^{\ell}$.

Once $\Omega^{\ell+1}$ is determined, the level-$(\ell+1)$ basis functions of the sub-patches corresponding to the subdomains can be derived as in~\cite{2018_direct-limit-volumes}. Recall that via the Catmull–Clark subdivision matrix, the control points at level $\ell$ and level $\ell+1$ are related as
\begin{equation}
    P^{\ell+1}_j=\sum^{n^\ell-1}_{i=0}c^{\ell}_{ji}P^\ell_i
\end{equation}
where $c^{\ell}_{ji}$ are the subdivision (refinement) coefficients and $n^\ell$ is the number of control points at level $\ell$.

We say that $B^{\ell+1}_j$ is a child of $B^{\ell}_i$ if $c^{\ell}_{ji}\neq 0$.
In particular, refining $\Omega^{\ell}$ activates all children generated from the marked set $\mathbf{B}^{\ell}_{r}$, and these children have supports fully
contained in the refined domain $\Omega^{\ell+1}$ by construction.
Accordingly, we define the set of active level-$(\ell+1)$ basis functions by support inclusion:
\begin{equation}
\mathbf{B}^{\ell+1}_{a}
=
\left\{
B^{\ell+1}_i \in \mathbf{B}^{\ell+1}
\;:\;
\operatorname{supp}\!\left(B^{\ell+1}_i\right)\subseteq \Omega^{\ell+1}
\right\}
\label{eq:active_fine_vthcc}
\end{equation}

\textbf{Truncation of affected level-$\ell$ functions.}
To identify the basis functions to be truncated, we start by considering the active basis functions at level $\ell$, denoted by $\mathbf{B}^{\ell}_a$. We define the set of to-be-truncated functions $\mathbf{B}^{\ell}_t$ as the subset of active functions in $\mathbf{B}^{\ell}_a$ whose children have support fully contained within the refined domain $\Omega^{\ell+1}$. Specifically, for each $B^{\ell}_i \in \mathbf{B}^{\ell}_a$, we check whether any of its children intersect with the children of the marked set $\mathbf{B}^{\ell}_r$. If so, $B^{\ell}_i$ is identified for truncation, and we write:

\begin{equation}
\mathbf{B}^{\ell}_t
=
\left\{
B^{\ell}_i \in \mathbf{B}^{\ell}_a
\;:\;
\text{chd}(B^{\ell}_i) \cap \text{chd}(\mathbf{B}^{\ell}_r) \neq \emptyset
\right\}
\label{eq:active_fine_vthcc_truncate}
\end{equation}

For each basis function $B^\ell_i \in \mathbf{B}^\ell_t$, the two-scale refinement relation expresses it as a linear combination of its level-$(\ell+1)$ descendants,
\begin{equation}
B_i^{\ell} = \sum_{j} c_{ji}^\ell\, B_j^{\ell+1}
\end{equation}
where $c_{ji}^\ell$ are refinement coefficients from the Catmull--Clark subdivision matrix; the proof of this linear relation is given in \textbf{Proposition~\ref{proposition}} (at the last of Section~\ref{sec:VTHCC}).
We then perform truncation by discarding the contributions of those children $B_j^{\ell+1}$ whose supports are entirely contained within the refined domain $\Omega^{\ell+1}$. This truncation process can be expressed as:
\begin{equation}
\text{trun}(B_i^{\ell}) = \sum_{\substack{\operatorname{supp}(B_j^{\ell+1}) \nsubseteq \Omega^{\ell+1}}} c_{ji}^\ell\, B_j^{\ell+1}
\end{equation}

\textbf{Collection of basis functions.}
Finally, the ExVCC at level $\ell+1$ is obtained by collecting:
(i) the newly activated fine-level functions, (ii) the truncated versions of the affected coarse-level functions, and (iii) the coarse-level functions that do not contribute to $\Omega^{\ell+1}$:
\begin{equation}
\mathbf{B}^{\ell+1}_{TH}
=
\mathbf{B}^{\ell+1}_{a}
\ \cup\
\operatorname{trun}_{\ell+1}\big(\mathbf{B}^{\ell}_t\big)
\ \cup\
\left(\mathbf{B}^{\ell}_a \setminus \mathbf{B}^{\ell}_t\right)
\label{eq:collect_vthcc}
\end{equation}
where $\mathbf{B}^{\ell+1}_{a}$ denotes the active fine-level basis functions, $\operatorname{trun}_{\ell+1}(\mathbf{B}^{\ell}_t)$ represents the truncated coarse-level functions, and $\mathbf{B}^{\ell}_a \setminus \mathbf{B}^{\ell}_t$ corresponds to the coarse-level functions that are not truncated.

\begin{proposition}
\label{proposition}
Volumetric Catmull--Clark basis functions are refinable: for any level $\ell$, coarse-level basis functions can be
expressed exactly as a linear combination of basis functions at level $\ell+1$, with coefficients given by the Catmull––Clark subdivision matrix.
\end{proposition}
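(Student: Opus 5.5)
The plan is to derive refinability directly from how the subdivision basis is defined: as the limit of the subdivision process applied to the control points. Let $\mathbf{P}^{\ell}=(P^\ell_0,\dots,P^\ell_{n^\ell-1})^T$ be the level-$\ell$ control points and $\mathbf{C}^\ell=(c^\ell_{ji})$ the (local) Catmull--Clark subdivision matrix, so that $\mathbf{P}^{\ell+1}=\mathbf{C}^\ell\mathbf{P}^\ell$. The limit volume $\mathbf{x}=\lim_{m\to\infty}\mathbf{C}^{\ell+m}\cdots\mathbf{C}^{\ell}\mathbf{P}^\ell$ depends only on $\mathbf{P}^\ell$. It is linear in $\mathbf{P}^\ell$, because every subdivision step is linear. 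Define the level-$\ell$ basis $\mathbf{B}^\ell$ as the vector of functions satisfying $\mathbf{x}(u,v,w)=(\mathbf{B}^\ell)^T\mathbf{P}^\ell$ for all $\mathbf{P}^\ell$; equivalently, $B^\ell_i$ is the limit obtained from the $i$-th unit control vector. The parametric domain of level $\ell+1$ is the eightfold split of each level-$\ell$ element, so both bases live on the same domain.

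First step: the limit of subdividing $\mathbf{P}^\ell$ equals the limit of subdividing $\mathbf{P}^{\ell+1}=\mathbf{C}^\ell\mathbf{P}^\ell$. This holds because the sequence of refined meshes from the second starting point is just the first sequence shifted by one index, so the two limits coincide pointwise on the shared domain. Hence
\begin{equation*}
(\mathbf{B}^\ell)^T\mathbf{P}^\ell=(\mathbf{B}^{\ell+1})^T\mathbf{C}^\ell\mathbf{P}^\ell\quad\text{for all }\mathbf{P}^\ell .
\end{equation*}
Substituting unit vectors for $\mathbf{P}^\ell$ gives $\mathbf{B}^\ell=(\mathbf{C}^\ell)^T\mathbf{B}^{\ell+1}$, that is, $B^\ell_i=\sum_j c^\ell_{ji}B^{\ell+1}_j$, which is the claim.

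Second step: check this identity elementwise against the explicit evaluation formula (1), so that it agrees with the bases actually used. On regular elements, $\mathbf{B}$ is the tricubic B-spline basis, and the identity is the classical tensor-product two-scale relation (uniform cubic mask $\tfrac18(1,4,6,4,1)$ in each direction). On irregular elements, formula (1) is itself built by applying $\bar{\mathbf{A}}$, its eigendecomposition $\boldsymbol{\Lambda}^{n-1}$ and the picking matrices $\mathbf{P}_k$ until the query point reaches a regular subelement. I will show that one level-$\ell$ evaluation at depth $n$ factors as the level-$(\ell+1)$ evaluation at depth $n-1$ composed with one application of $\mathbf{A}$, which is the restriction of $\mathbf{C}^\ell$ to the local neighborhood. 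Concretely, write $\mathbf{A}=\mathbf{V}^{-1}\boldsymbol{\Lambda}\mathbf{V}$ in the convention of (1) and use $\boldsymbol{\Lambda}^{n-1}=\boldsymbol{\Lambda}^{n-2}\boldsymbol{\Lambda}$. The child subelements $k\in\{0,\dots,6\}$ are regular or layered and reduce to the B-spline case or its tensor product with the 2D Catmull--Clark basis. The child $k=7$ stays irregular and is handled by induction on the depth $n$.

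The main obstacle I expect is the well-posedness and locality bookkeeping. This covers two points. First, the limit must exist and the sequences must coincide at EVs and EEs, where the points form a measure-zero set and can be handled by continuity of the limit functions. Second, the global index relation $c^\ell_{ji}$ must match the local matrices $\bar{\mathbf{A}}$ and $\mathbf{P}_k$ near extraordinary features, and the extended versus non-extended neighborhoods must be tracked consistently. I would resolve this by working with the global linear subdivision operator throughout and invoking locality only to justify that formula (1) computes the same limit functions.
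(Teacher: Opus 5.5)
Your proposal is correct and rests on the same two ingredients as the paper's proof, but you arrange them differently. The paper argues patch-wise on an irregular element. For the regular/layered children $k\in\{0,\dots,6\}$ it uses your Step 1 locally: the geometry on $\Omega^{\ell+1}_k$ is the same whether evaluated from $\mathbf{C}^\ell$ with $\mathbf{B}^\ell$ or from $\mathbf{P}_k\bar{\mathbf{A}}\mathbf{C}^\ell$ with the child basis, and $\mathbf{C}^\ell$ is arbitrary, so $\mathbf{B}^\ell|_{\Omega^{\ell+1}_k}=(\mathbf{P}_k\bar{\mathbf{A}})^T\mathbf{B}^{\ell+1}_k$. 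For the irregular child $k=7$ it does not invoke invariance at all; it manipulates the eigen-formula to get $\mathbf{B}^\ell=\mathbf{A}^T\mathbf{B}^{\ell+1}$, which is what your Step 2 sketches. Making the shift-of-the-limit-sequence argument primary and global buys you three things. First, all element types (regular, layered, irregular, boundary) are treated uniformly. Second, you get the result directly in the global index form $B^\ell_i=\sum_j c^\ell_{ji}B^{\ell+1}_j$ that the statement asserts; the paper leaves the assembly of its local relations into this form implicit. Third, you do not rely on diagonalizing or inverting $\mathbf{A}$, which the paper's irregular-branch computation does via $(\mathbf{A}^T)^{-1}$. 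The cost is that Step 1 is a statement about the limit-defined basis, so its identification with the basis evaluated by (1) has to come from the direct-limit literature. The paper's route, by contrast, works directly with and produces the element-local refinement matrices $(\mathbf{P}_k\bar{\mathbf{A}})^T$ and $\mathbf{A}^T$ that the ExVCC truncation actually uses. One slip to fix in Step 2 concerns the convention $\mathbf{A}\mathbf{V}=\mathbf{V}\mathbf{\Lambda}$ used by the paper, which is what makes the transpose of (1) read $\mathbf{B}^T=\mathbf{N}^T\mathbf{P}_k\bar{\mathbf{A}}\mathbf{A}^{n-1}$. Under it, $\mathbf{A}=\mathbf{V}\mathbf{\Lambda}\mathbf{V}^{-1}$, not $\mathbf{V}^{-1}\mathbf{\Lambda}\mathbf{V}$. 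The factorization you want is then the one-line identity $\mathbf{A}^T(\mathbf{V}^{-1})^T=(\mathbf{V}^{-1})^T\mathbf{\Lambda}$, which gives $\mathbf{A}^T\mathbf{B}^{\ell+1}=\mathbf{B}^\ell$ immediately, so no induction on $n$ is needed.
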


\begin{proof}
The mixed regular/layered/irregular outcome of one-step VCC subdivision motivates a patch-wise refinability analysis. Let $\mathbf{B}^{\ell}$ denote the vector of volumetric Catmull--Clark basis functions associated with an irregular element $\Omega^{\ell}$ at level $\ell$. Using the explicit direct-limit formulation~\cite{2018_direct-limit-volumes} for volumetric Catmull--Clark solids, we have

\begin{equation}
\mathbf{B}^{\ell}
=
(\mathbf{V}^{-1})^{T}\,
\Lambda^{\,n-\ell-1}\,
(\mathbf{P}_{k}\bar{\mathbf{A}}\mathbf{V})^{T}\,
\mathbf{N}
\label{eq:1}
\end{equation}

For the $k_{th}$ child sub-element $\Omega^{\ell+1}_k$ at level $\ell+1$, we denote its basis vector by $\mathbf{B}^{\ell+1}_k$:
\begin{equation}
\mathbf{B}^{\ell+1}_k :=
\begin{cases}
\mathbf{N}, & k\in\{0,1,2,3,4,5,6\}\\
\mathbf{B}^{\ell+1}, & k=7
\end{cases}
\label{eq:child_basis_def}
\end{equation}

We consider the eight child sub-elements $\Omega^{\ell+1}_k$ produced by one subdivision step and distinguish two cases depending on the child type.

For the child that remains irregular ($k=7$), denote its level-$(\ell+1)$ basis vector by $\mathbf{B}^{\ell+1}$. By the same direct-limit formulation,

\begin{equation}
\begin{aligned}
\mathbf{B}^{\ell+1}
&= (\mathbf{V}^{-1})^{T}\,
\mathbf{\Lambda}^{\,n-(\ell+1)-1}\,
(\mathbf{P}_{k}\bar{\mathbf{A}}\mathbf{V})^{T}\,
\mathbf{N} \\
&= (\mathbf{\Lambda}\mathbf{V}^T)^{-1}\,
\mathbf{\Lambda}^{\,n-\ell-1}\,
(\mathbf{P}_{k}\bar{\mathbf{A}}\mathbf{V})^{T}\,
\mathbf{N} \\
\end{aligned}
\label{eq:2}
\end{equation}
Since $\mathbf{A}\mathbf{V}=\mathbf{V}\mathbf{\Lambda}$ and $\mathbf{\Lambda}=\mathbf{\Lambda}^T$, it follows that

\begin{equation}
    \mathbf{\Lambda}\mathbf{V}^T=\mathbf{V}^T\mathbf{A}^T
    \label{eq:trans}
\end{equation}
By plugging Eq.~(\ref{eq:trans}) into Eq.~(\ref{eq:2}), we obtain
\begin{equation}
\begin{aligned}
\mathbf{B}^{\ell+1}
&= (\mathbf{A}^{T})^{-1}\,
(\mathbf{V}^{-1})^{T}\,
\Lambda^{\,n-\ell-1}\,
(\mathbf{P}_{k}\bar{\mathbf{A}}\mathbf{V})^{T}\,
\mathbf{N}\\
&= (\mathbf{A}^{T})^{-1}\,
\mathbf{B}^{\ell}
\end{aligned}
\label{eq:3}
\end{equation}
Therefore, the refinability relation along the irregular branch is
\begin{equation}
    \mathbf{B}^{\ell}=\mathbf{A}^T\mathbf{B}^{\ell+1}
\end{equation}
where $\mathbf{A}$ is the subdivision matrix, obtained as a square form of the extended subdivision matrix $\bar{\mathbf{A}}$ (see~\cite{2018_direct-limit-volumes} for further details), and $(\mathbf{\Lambda}, \mathbf{V})$ denotes the eigenstructure of $\mathbf{A}$.

For the remaining children ($k\in\{0,1,2,3,4,5,6\}$), the local element type becomes regular or layered after one subdivision step. We analyze the refinability by taking advantage of the refinement invariance of the geometry over each child sub-element.

Let $\mathbf{C}^{\ell}$ denote the control vertices at level $\ell$. One subdivision step produces the child-level control vertices by

\begin{equation}
    \mathbf{C}^{\ell+1}=\bar{\mathbf{A}}\mathbf{C}^{\ell}
\end{equation}
and the control vertices associated with the child sub-element $k$ are obtained by picking

\begin{equation}
    \mathbf{C}_k^{\ell+1}=\mathbf{P}_k\mathbf{C}^{\ell+1}
\end{equation}

On a regular (resp. layered) child sub-element, the geometry can be evaluated using the corresponding level-$\ell+1$ basis vector $\mathbf{B}^{\ell+1}_k$ (equivalent to $\mathbf{N}$), hence

\begin{equation}
\begin{aligned}
\mathbf{s}\big|_{\Omega^{\ell+1}_k}
&=
\big(\mathbf{C}^{\ell+1}\big)^{T}\mathbf{P}_{k}^{T}\mathbf{B}^{\ell+1}_k\\
&=(\mathbf{C}^{\ell})^T\bar{\mathbf{A}}^T\mathbf{P}_{k}^{T}\mathbf{B}^{\ell+1}_k
\end{aligned}
\label{eq:skn_dual}
\end{equation}

On the other hand, by refinement invariance, the same restricted geometry can also be written in terms of the parent basis restricted to $\Omega^{\ell+1}_k$

\begin{equation}
\begin{aligned}
\mathbf{s}\big|_{\Omega^{\ell+1}_k}
&=
\big(\mathbf{C}^{\ell}\big)^{T}\mathbf{B}^{\ell}
\end{aligned}
\end{equation}

As a result, the parent basis restricted to the child sub-element admits the linear relation
\begin{equation}
\mathbf{B}^{\ell}\big|_{\Omega^{\ell+1}_k}
=
(\mathbf{P}_k \bar{\mathbf{A}})^{T}\,\mathbf{B}^{\ell+1}_k,
\quad k\in\{0,1,2,3,4,5,6\}
\end{equation}    
\end{proof}

\subsection{Iso-Parametric Modeling of Microstructures}
In our method, the microstructure is formed as an assembly of microstructure primitives. This section, therefore, concentrates on the spline representation of primitives and on multiscale primitive design enabled by hierarchical refinement, as well as how the resulting implicit fields are evaluated to obtain the final microstructure geometry.

\subsubsection{Single-Scale Primitive Construction}
\label{sec:single-scale}
As stated in Section~\ref{sec:framework}, microstructure geometry is implicitly defined by composing a finite set of implicit primitives through a Boolean composition graph. Accordingly, the construction here focuses on (i) the implicit spline representation of each primitive, and (ii) the implicit Boolean composition that yields a single composite field.

Let $\mathbf{B}=\{B_i\}$ denote the ExVCC spline basis active on the current local parametric domain. Each primitive $p$ is represented as a scalar spline field:
\begin{equation}
\phi_p(\mathbf{u})=\sum_{B_i\in\mathbf{B}} \alpha_{p,i}\, B_i(\mathbf{u}),
 \quad\mathbf{u}\in[0,1]^3
\label{eq:primitive-spline}
\end{equation}
where $\alpha_{p,i}$ are the coefficients of basis functions. Within the same local parametric domain, primitives share the identical basis and differ only in their coefficients. For these typical primitives—such as cylinders, spheres, and ellipsoids—we provide a method for computing the coefficients in~\ref{app:coeff}.

We also provide two types of operations for primitives: rigid transformations and Boolean compositions.
For a primitive field $\phi_p(\mathbf{u})$, a rigid transform $T_p$ is expressed as:
\begin{equation}
\tilde{\phi}_p(\mathbf{u})=\phi_p\!\left(T_p^{-1}(\mathbf{u})\right)
\label{eq:primitive-transform}
\end{equation}

Given two implicit fields $\phi_A(\mathbf{u})$ and $\phi_B(\mathbf{u})$, the Boolean composition between them is conducted using the standard max/min construction for union, intersection, and difference:
\begin{equation}
\label{eq:boolean}
\begin{aligned}
\phi_{A\cup B}(\mathbf{u}) &= \max\!\big(\phi_A(\mathbf{u}),\,\phi_B(\mathbf{u})\big)\\
\phi_{A\cap B}(\mathbf{u}) &= \min\!\big(\phi_A(\mathbf{u}),\,\phi_B(\mathbf{u})\big)\\
\phi_{A\setminus B}(\mathbf{u}) &= \min\!\big(\phi_A(\mathbf{u}),\, -\phi_B(\mathbf{u})\big)
\end{aligned}
\end{equation}

This implicit spline representation of primitives serves as the basic building block for subsequent multiscale primitive construction and on-demand microstructure evaluation.

\subsubsection{Multiscale Primitive Construction}
\label{sec:multiscale}

We support three multiscale operations for spline-based implicit primitives, as described in Section~\ref{sec:framework}. 
We start from a primitive $p$ defined at level $\ell$ in the active spline basis $\mathbf{B}^{\ell}=\{B_i^{\ell}\}$:
\begin{equation}
\phi_p^{\ell}(\mathbf{u})=\sum_i \alpha_{p,i}^{\ell}\,B_i^{\ell}(\mathbf{u})
\end{equation}
When a region is refined to level $\ell+1$, the representation space changes to the finer basis $\mathbf{B}^{\ell+1}=\{B_j^{\ell+1}\}$. 
To keep the geometry unchanged while gaining more degrees of freedom for subsequent edits, we rewrite the same primitive field using the refined basis.

\textbf{Geometry-preserving refinement operation.}
The ExVCC basis satisfies the two-scale relation:
\begin{equation}
B_i^{\ell}(\mathbf{u})=\sum_{j} c_{ji}^{\ell}\,B_j^{\ell+1}(\mathbf{u})
\label{eq:ms-two-scale}
\end{equation}
where $c_{ji}^{\ell}$ are one-step refinement coefficients. Substituting Eq.~\eqref{eq:ms-two-scale} into the level-$\ell$ expansion yields the coefficients on level $\ell+1$:
\begin{equation}
\alpha_{p,j}^{\ell\rightarrow \ell+1}=\sum_i c_{ji}^{\ell}\,\alpha_{p,i}^{\ell}
\label{eq:ms-prolong-coef}
\end{equation}
Using these coefficients, we define
\begin{equation}
\phi_{p}^{\ell\rightarrow \ell+1}(\mathbf{u})
= \sum_{j}\alpha_{p,j}^{\ell\rightarrow \ell+1}\,B_j^{\ell+1}(\mathbf{u})
\label{eq:ms-prolong}
\end{equation}
which is the same primitive field as $\phi_p^{\ell}$, but expressed in the refined spline basis $\mathbf{B}^{\ell+1}$. 
This refinement step preserves the primitive geometry, while providing a finer representation with additional degrees of freedom on level $\ell+1$. 

It should be noted that these additional degrees of freedom allow for the introduction of freeform details into the microstructure, which can be controlled by adjusting the corresponding coefficients. The values of these coefficients can either be explicitly specified by the user or computed through a physics-driven optimization process.

\textbf{Copy-and-scale operation.}
To replicate parent-level patterns inside refined child cells, we instantiate additional level-$(\ell+1)$ primitives by copying the parent primitive coefficients.
Let $\boldsymbol{\alpha}_{p}^{\ell}=\{\alpha_{p,i}^{\ell}\}$ denote the coefficient vector of primitive $p$ at level $\ell$, stored in a fixed local ordering.
For each refined child cell $c$, we create a primitive instance $p_c\in\mathcal{P}_{\mathrm{fine}}^{\ell+1}$ by assigning the same coefficient values to the basis functions active in cell $c$ (using the same local ordering):
\begin{equation}
\alpha_{p_c,j}^{\ell+1} \;=\; \alpha_{p,j}^{\ell}
\label{eq:ms-copy}
\end{equation}

The resulting primitive field is then represented on the level-$(\ell+1)$ basis as:
\begin{equation}
\phi_{p_c}^{\ell+1}(\mathbf{u})
=\sum_{j}\alpha_{p_c,j}^{\ell+1}\,B_j^{\ell+1}(\mathbf{u})
\label{eq:ms-copy-field}
\end{equation}

It should be noted that no explicit scaling transformation is applied here. The copied primitive becomes smaller automatically because the level-$(\ell+1)$ basis functions have smaller supports over smaller refined cells.

\begin{figure}[t]
\centering
\includegraphics[width=0.42\textwidth]{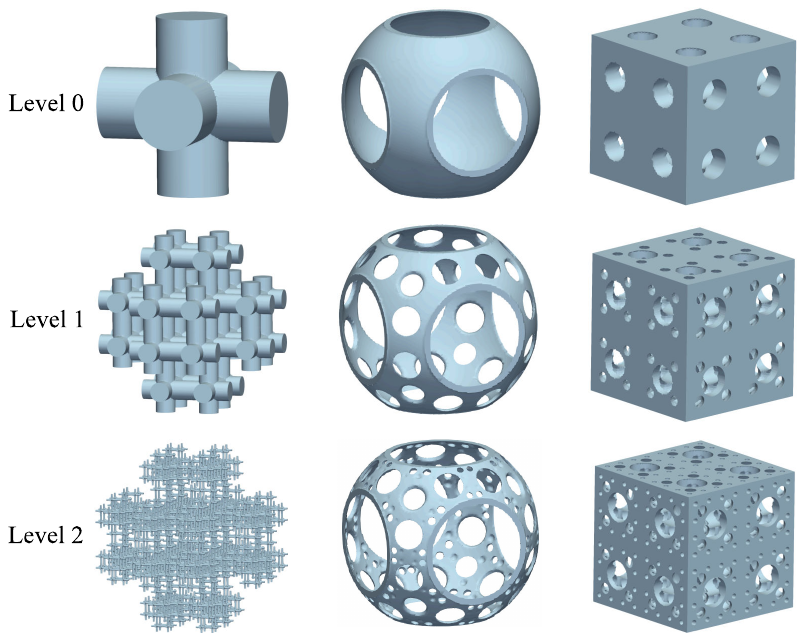}
\caption{Three typical microstructure primitives at three consecutive hierarchical levels.}
\label{fig:multiscale_cell}
\end{figure}

\begin{table*}[b]
\centering
\caption{Summary statistics of the test models in Case Study 1.}
\label{tab:model_statistics}
\begin{tabular*}{0.65\textwidth}{cccccc}
\toprule
Models & No. of CVs & No. of HEs & No. of EVs & No. of EEs & No. of Genus \\
\midrule
 Torus & 96 & 54 & 0 & 0 & 1 \\
Bitorus & 212 & 126 & 4 & 16 & 2 \\
Plane & 288 & 148 & 12 & 36 & 0 \\
Cow & 400 & 240 & 0 & 0 & 0 \\
Wrench & 360 & 176 & 16 & 48 & 1 \\ 
Bracket & 1092 & 604 & 56 & 168 & 6 \\
\bottomrule
\end{tabular*}
\end{table*}

\begin{figure*}[t]
\centering
\includegraphics[width=0.9\textwidth]{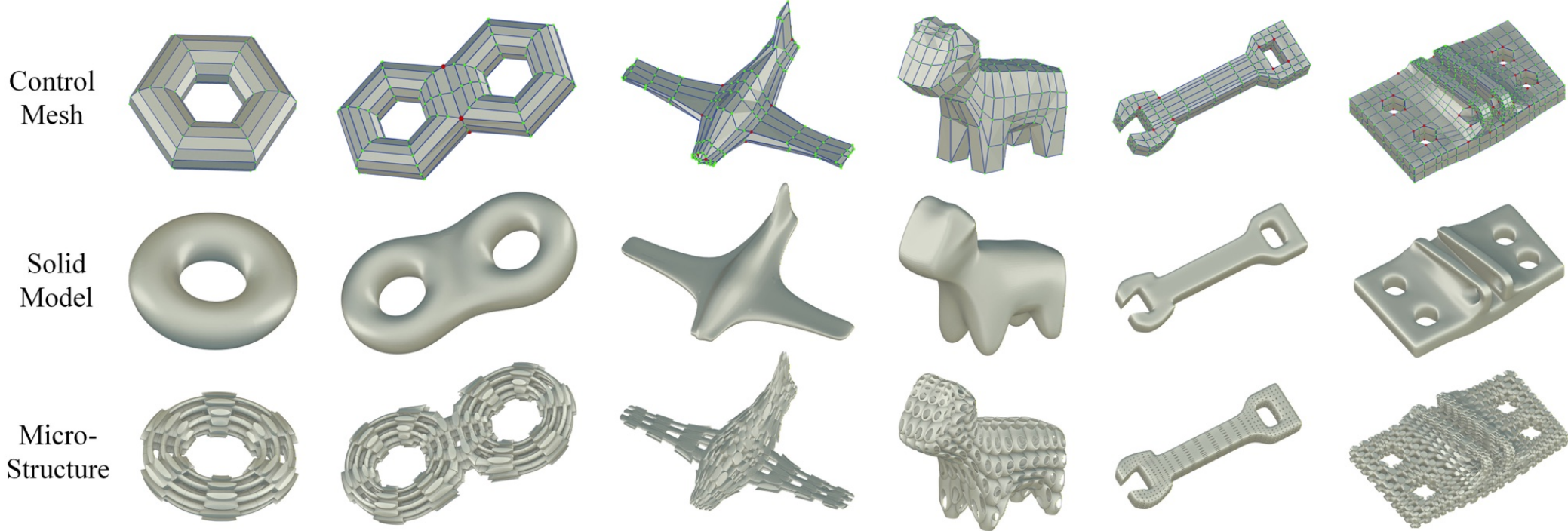}
\caption{The microstructures generated in different volumetric domains (with EVs marked in red in control meshes).}
\label{fig:topo_adapt}
\end{figure*}

\textbf{Cross-scale Boolean operation.}
Cross-scale composition requires both fields to be represented on the same parametric domain and in the same spline space.
Therefore, before combining a coarse-level primitive field $\phi_p^{\ell}(\mathbf{u})$ with a fine-level primitive field $\phi_q^{\ell+1}(\mathbf{u})$, we first lift the coarse primitive into the level-$(\ell+1)$ spline space using Eq.~\eqref{eq:ms-prolong}, obtaining $\phi_p^{\ell\rightarrow \ell+1}(\mathbf{u})$.

We then define the cross-scale composed primitive field (in the level-$(\ell+1)$ space) by the standard max/min rules:
\begin{equation}
\phi_{p\circ q}^{\ell+1}(\mathbf{u})=
\begin{cases}
\max\;\bigl(\phi_p^{\ell\rightarrow \ell+1}(\mathbf{u}),\, \phi_q^{\ell+1}(\mathbf{u})\bigr),
& \text{Union}\\
\min\;\bigl(\phi_p^{\ell\rightarrow \ell+1}(\mathbf{u}),\, -\phi_q^{\ell+1}(\mathbf{u})\bigr),
& \text{Difference}\\
\min\;\bigl(\phi_p^{\ell\rightarrow \ell+1}(\mathbf{u}),\, \phi_q^{\ell+1}(\mathbf{u})\bigr),
& \text{Intersection}
\end{cases}
\label{eq:ms-cross-bool-prim}
\end{equation}

By applying these operations recursively across consecutive refinement levels, multiscale patterns can be generated. Fig.~\ref{fig:multiscale_cell} illustrates three typical families of multiscale primitives—strut-based, shell-based, and solid-based—produced through these operations.

\subsubsection{On-demand Microstructure Evaluation}

On-demand microstructure evaluation is achieved by recursively traversing the multiscale tree structure. Given a query region \(\mathcal{Q}\) in the parametric domain and a target refinement level \(\ell^\ast\), we traverse the tree to collect the active leaves at the target level.
 
First, we identify the cells whose parametric domains intersect with the query region \(\mathcal{Q}\) and, when necessary, include neighboring cells to ensure that the spline basis functions that are nonzero inside \(\mathcal{Q}\) are fully covered. Starting from the coarsest level, we refine only the intersecting cells until reaching the target refinement level \(\ell^\ast\), producing a set of refined cells at this level. This ensures that only the necessary regions are refined, avoiding unnecessary global refinement. The set of refined cells is defined as:
\begin{equation}
    \mathcal{A} = \{ \mathcal{C}_i \mid \mathcal{C}_i \cap \mathcal{Q} \neq \emptyset, \ \ell(\mathcal{C}_i) = \ell^\ast \}
\end{equation}
where \(\mathcal{C}_i\) are the cells and \(\ell(\mathcal{C}_i)\) is their refinement level.

Within the refined cells at level \(\ell^\ast\), the microstructure primitives are instantiated in the corresponding spline space. Coarse primitives are rewritten in the \(\ell^\ast\) basis, while finer primitives are instantiated only inside the refined cells. Each node stores a node-local implicit field defined by its primitive expression. The implicit field is computed recursively by combining parent and child fields through Boolean operations (e.g., union, intersection, or difference):
\begin{equation}
    \phi_p^{\ell+1}(\mathbf{u}) = \text{Bool}\left( \phi_p^{\ell}(\mathbf{u}), \phi_c^{\ell+1}(\mathbf{u}) \right)
\end{equation}
where \(\text{Bool}\) denotes a Boolean operation, \(\phi_p^{\ell}(\mathbf{u})\) is the field at the parent node, and \(\phi_c^{\ell+1}(\mathbf{u})\) is the field at the child node.

This recursive process continues until the leaf nodes are reached. The final field on an active leaf is computed by recursively applying the Boolean combination along the root-to-leaf path. The overall field at a leaf \(\mathcal{L}\) is given by:
\begin{equation}
    \phi_L(\mathbf{u}) = \text{Bool}\left( \dots \text{Bool}\left( \phi_{\text{root}}(\mathbf{u}), \phi_{\ell_1}(\mathbf{u}) \right), \dots, \phi_{\ell_n}(\mathbf{u}) \right)
\end{equation}
where \(\phi_{\ell_1}(\mathbf{u}), \phi_{\ell_n}(\mathbf{u})\) are the fields from intermediate nodes, recursively combined until the leaf node \(\mathcal{L}\) is reached.

Once the implicit field is computed for the active cells at level \(\ell^\ast\), we extract the corresponding isosurface within the query region and then map it to the physical domain using the macro volumetric mapping.

\begin{figure}[t]
\centering
\includegraphics[width=0.4\textwidth]{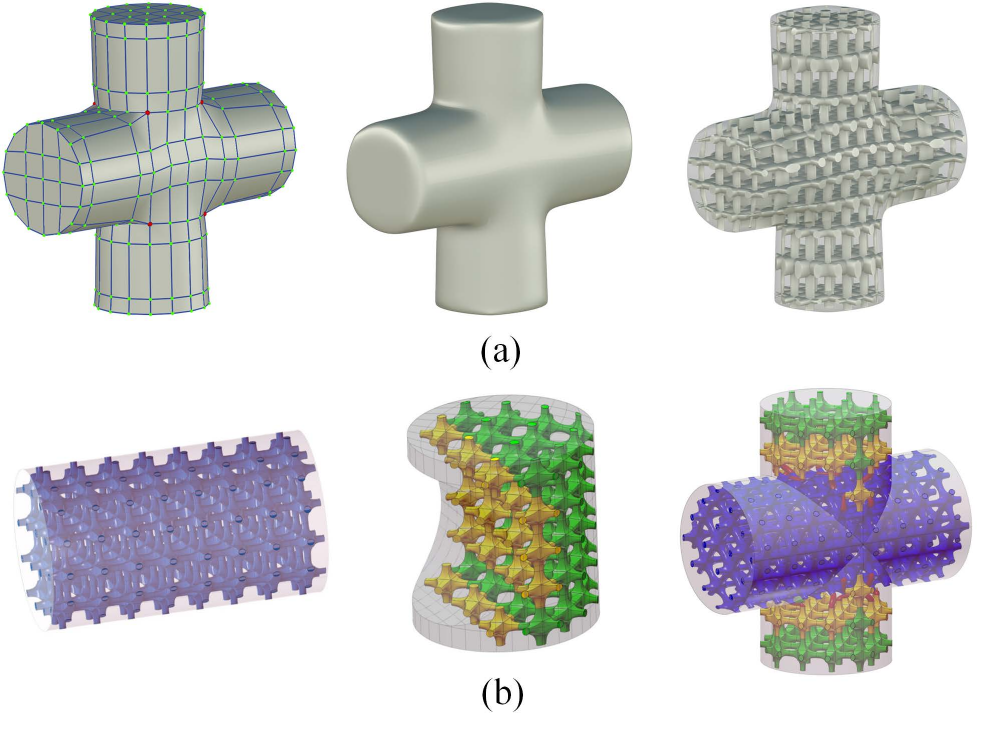}
\caption{The microstructures generated by our method (a) and by the method proposed in~\cite{2021_elber_conformal}(b) ((b) is referenced from~\cite{2021_elber_conformal}).}
\label{fig:topo_comparison}
\end{figure}

\section{Results and Discussion}
\label{sec:results}

The proposed method has been implemented in C++, running on an Intel Core i9-12900K CPU (5.20\, GHz) with 128\, GB RAM and Ubuntu 20.04 LTS operating system. To demonstrate the effectiveness of the method, we conducted five case studies, for which the visual results and quantitative statistics are reported. Specifically, Case Study 1 (Fig.~\ref{fig:topo_adapt} and Fig.~\ref{fig:topo_comparison}) considers several models spanning a wide range of shape complexity and topological configurations, demonstrating the broad applicability of our method for microstructure modeling. Case Study 2 (Fig.~\ref{fig:on-demand} and Table~\ref{tab:memory_cells}) and Case Study 3 (Fig.~\ref{fig:multiscale}) evaluate the proposed framework for on-demand generation and multiscale modeling, respectively. Case Study 4 (Fig.~\ref{fig:integrity} and Fig.~\ref{fig:integrity_compare}) examines the preservation of geometric integrity during interactive editing. Finally, Case Study 5 (Fig. 13–15) demonstrates the potential of the proposed method for advanced microstructure design, for example, generating gradient microstructures.

\subsection{Examples}

Case Study 1 evaluates the method’s applicability across diverse scenarios by considering a set of representative volumetric models with varying shape complexity and topology. These models cover a wide range of complexity, characterized by both the genus of the volumetric domain and the number of EVs in the control meshes, with each ranging from zero to multiple. The statistical data for these models are provided in Table~\ref{tab:model_statistics}, including the number of control vertices (abbreviated as CVs), hexahedral elements in the control meshes (abbreviated as HEs), EVs, EEs, and domain genus.
The control meshes, solid models, and microstructures of these models are shown in Fig.~\ref{fig:topo_adapt}.
In addition, we generate microstructures in a representative non-tensor-product volumetric domain (Fig.~\ref{fig:topo_comparison}(a)), namely the cross-shaped model used in~\cite{2021_elber_conformal} to enable a direct comparison with their method. The resulting geometry of our method is continuous, while ~\cite{2021_elber_conformal} still needs to generate an additional layer of cells to connect the separated parts of microstructures.

\begin{figure*}[t]
\centering
\includegraphics[width=0.6\textwidth]{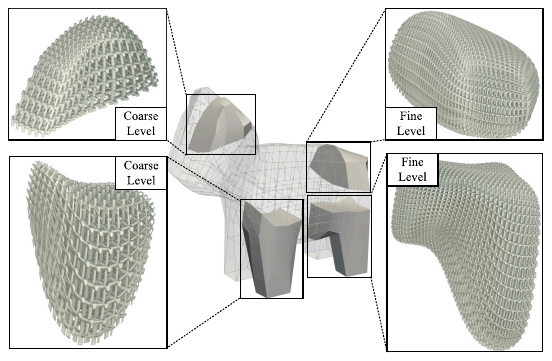}
\caption{On-demand microstructure generation in four local regions at different hierarchies.}
\label{fig:on-demand}
\end{figure*}

\begin{table*}[t]
  \centering
  \caption{Comparison of geometric complexity and storage cost.}
  \label{tab:memory_cells}
  \setlength{\tabcolsep}{12pt}
  
  \resizebox{0.8\textwidth}{!}{
  \begin{tabular*}{0.9\textwidth}{c c r r r r c}
\toprule
    \multirow{2}{*}{Models} & \multirow{2}{*}{Level} & \multirow{2}{*}{No. of Cells} & \multirow{2}{*}{Equiv. Triangles} & \multicolumn{2}{c}{Storage Cost} & Ratio \\
    \cmidrule(lr){5-6} 
     & & & & Ours & Explicit & (Expl./Ours) \\
    \midrule
    
    \multirow{5}{*}{Torus} 
      & 0 & 54        & $225,272$          & 65.33 KB  & 4.12 MB   & $63\times$ \\
      & 1 & 432       & $2,011,894$        & 586.18 KB & 47.93 MB  & $82\times$ \\
      & 2 & 3,456     & $16,807,219$       & 4.96 MB   & 471.20 MB & $95\times$ \\
      & 3 & 27,648    & $137,600,004$      & 41.37 MB  & 4.24 GB   & $105\times$ \\
      & \textbf{4} & \textbf{221,184} & \boldmath$1,100,791,185$\unboldmath & \textbf{337.97 MB} & \textbf{37.96 GB} & \textbf{112$\times$} \\
    \midrule
    
    \multirow{5}{*}{Cow} 
      & 0 & 240       & $1,035,800$        & 306.01 KB & 21.24 MB  & $69\times$ \\
      & 1 & 1,920     & $9,088,308$        & 2.61 MB   & 231.17 MB & $88\times$ \\
      & 2 & 15,360    & $76,592,885$       & 22.34 MB  & 2.14 GB   & $98\times$ \\
      & 3 & 122,880   & $628,050,001$      & 185.03 MB & 19.98 GB  & $110\times$ \\
     & \textbf{4} & \textbf{983,040} & \boldmath$5,068,265,226$\unboldmath & \textbf{1.47 GB} & \textbf{OOM} & \textbf{-} \\
    \midrule
    
    \multirow{5}{*}{Bracket} 
      & 0 & 604       & $2,163,096$        & 725.92 KB & 43.43 MB  & $60\times$ \\
      & 1 & 4,832     & $21,684,208$       & 6.34 MB   & 554.52 MB & $87\times$ \\
      & 2 & 38,656    & $191,076,221$      & 55.15 MB  & 5.38 GB   & $100\times$ \\
      & 3 & 309,248   & $1,561,251,516$    & 461.20 MB & 50.73 GB  & $113\times$ \\
      & \textbf{4} & \textbf{2,473,984} & \boldmath$12,578,966,700$\unboldmath & \textbf{3.69 GB}  & \textbf{OOM} & \textbf{-} \\
      
    \bottomrule
    \multicolumn{7}{l}{\footnotesize OOM: out-of-memory; the explicit method exceeds the 128\,GB memory limit on our system.}
  \end{tabular*}
  } 
\end{table*}

Case Study 2 evaluates the proposed framework for compact representation and on-demand generation. Our hierarchical representation allows for storing only a coarse mesh and refining local areas as needed. This method results in a much lower memory cost than trivially storing the fully instantiated microstructure geometry.
We validate this advantage by comparing memory consumption against a baseline that explicitly stores the complete final geometry (i.e., the extracted triangle mesh). Two scaling regimes are considered: increasing the number of coarse-level cells and increasing the refinement depth (which also increases the total number of cells). The resulting storage statistics are summarized in Table~\ref{tab:memory_cells}. We further present an on-demand generation example (Fig.~\ref{fig:on-demand}), in which the final geometry is instantiated only within user-specified regions, avoiding unnecessary computation in the rest of the domain and thereby enabling large-scale microstructure generation.

Case Study 3 showcases the proposed framework for multiscale microstructure modeling. Starting from a coarse microstructure, we refine user-specified regions to introduce fine-scale features while keeping the remainder of the domain unchanged. Refinement can be hierarchical: regions refined at a given level can be further refined, resulting in geometric details distributed across multiple scales. As shown in Fig.~\ref{fig:multiscale}, we present three-level microstructure examples using three representative cell types: strut-, shell-, and solid-based cells. Across all examples, locally refined regions exhibit increased geometric richness while preserving coherent transitions to the surrounding coarse structure.

For Case Study 4, we apply controlled macroscale shape modifications (e.g., deformation) while keeping the microstructure specification fixed, and generate both single-scale and multiscale microstructures on the original and edited volumetric models using identical microstructure parameters. Qualitative comparisons are shown in Fig.~\ref{fig:integrity}. In parallel, we perform the same experiment using the conventional design workflow built in the powerful commercial CAD software package, Siemens NX, as a baseline (Fig.~\ref{fig:integrity_compare}). In contrast to our method, this baseline does not maintain geometric integrity under macroscale changes and may introduce artifacts such as isolated fragments.

Case Study 5 explores the capabilities of the proposed framework for advanced microstructure design, with a focus on gradient structures. The process begins with a prescribed offset field, which is applied directly to the implicit field defining the microstructure. This offset field enables control over local geometric variations to introduce spatially varying features (see Fig.~\ref{fig:gradient_design}). Furthermore, by manipulating the offset field, the method is able to generate smoother transitions at scale interfaces, improving structural performance by reducing stress concentrations, especially in strut-based multiscale microstructures (see Fig.~\ref{fig:transition_compare}). To assess the performance of gradient designs, comparative displacement simulations are conducted between gradient and non-gradient microstructures under identical boundary conditions, with the gradient microstructures subjected to twice the load applied to the non-gradient ones. As shown in Fig.~\ref{fig:gradient_compare}, gradient microstructures exhibit improved deformation behavior. This case study highlights the potential of the framework to design microstructures with controlled gradients directly via implicit representations.

\begin{figure*}[t]
\centering
\includegraphics[width=0.85\textwidth]{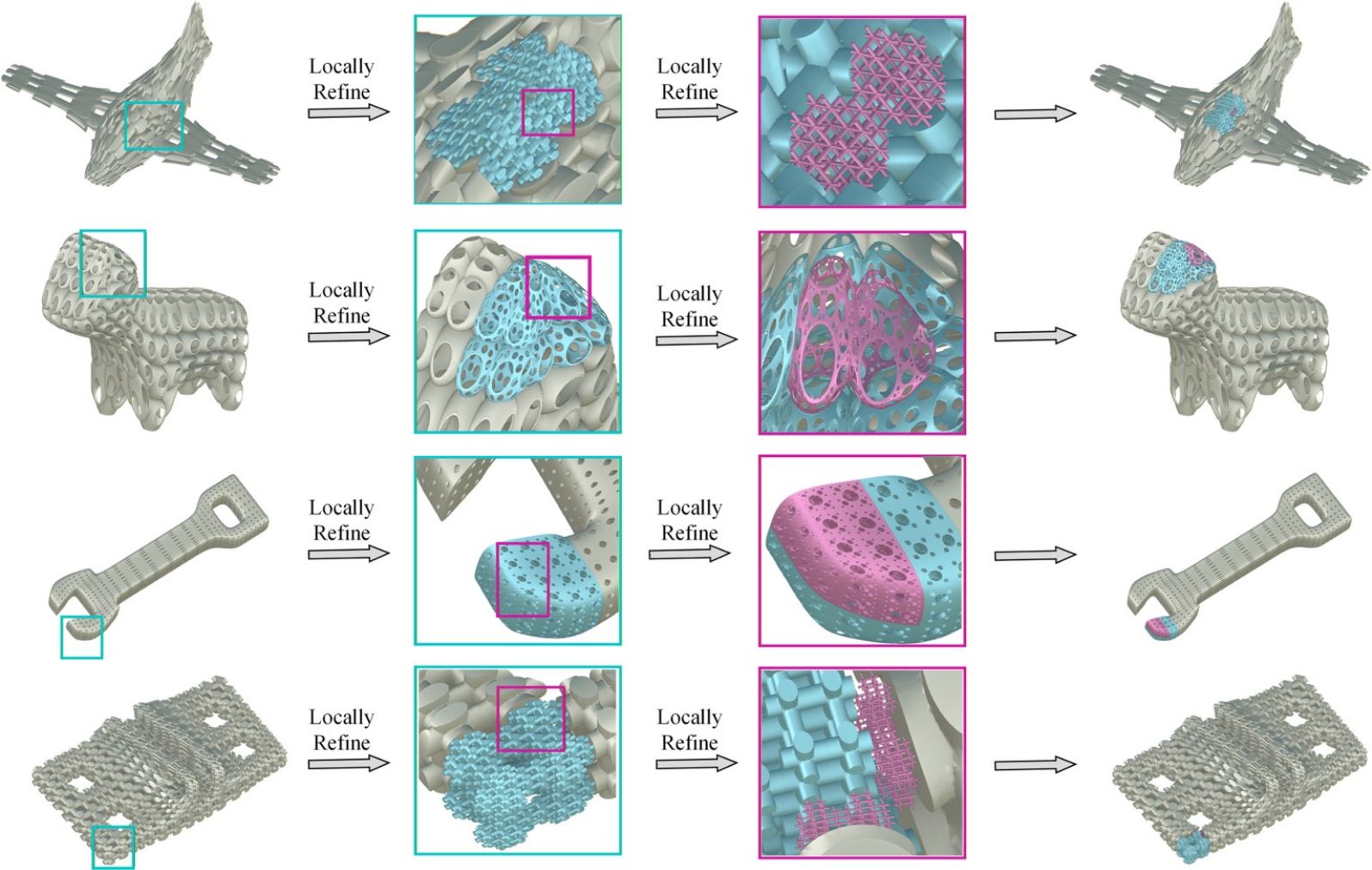}
\caption{The multiscality introduced into the microstructure through local refinement.}
\label{fig:multiscale}
\end{figure*}

\begin{figure*}[t]
\centering
\includegraphics[width=0.9\textwidth]{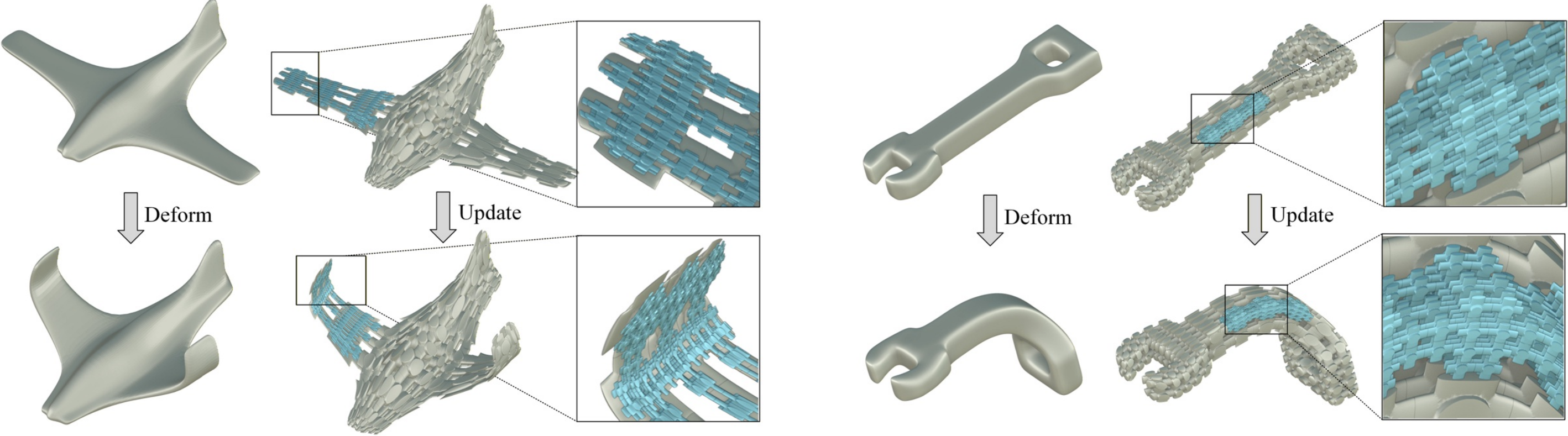}
\caption{Geometric association across multiple scales under macro-shape deformation.}
\label{fig:integrity}
\end{figure*}

\begin{figure}[t]
\centering
\includegraphics[width=0.5\textwidth]{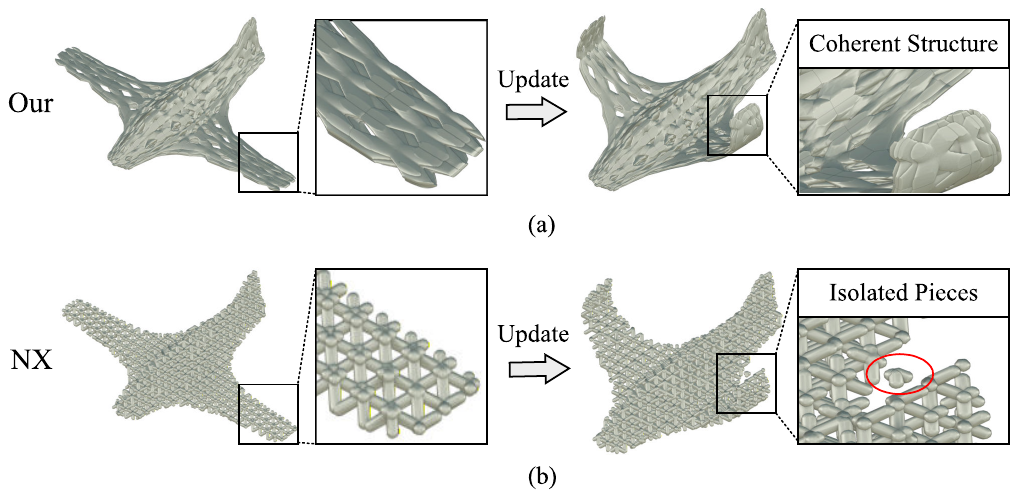}
\caption{Comparison of the geometric integrity of generated microstructures under deformation between our method (a) and Siemens NX (b).}
\label{fig:integrity_compare}
\end{figure}

\begin{table*}[b]
\centering
\caption{Computation time for modeling information generation and evaluation at different resolutions.}
\label{tab:efficiency_statistics}
\begin{tabular*}{0.75\textwidth}{c c c c c c c c c c c}
\toprule
 & \multicolumn{2}{c}{$2^4*2^4*2^4$} & \multicolumn{2}{c}{$2^5*2^5*2^5$} & \multicolumn{2}{c}{$2^6*2^6*2^6$} & \multicolumn{2}{c}{$2^7*2^7*2^7$} & \multicolumn{2}{c}{$2^8*2^8*2^8$} \\
 \cmidrule(lr){2-11}
 & Info & Eval & Info & Eval & Info& Eval& Info& Eval& Info& Eval\\
\midrule
 Regular&5ms &8ms &5ms &17ms &5ms &45ms &5ms & 200ms&6ms &1.3s \\

  Irregular&56ms &2.3s &59ms &2.5s &57ms &2.9s &57ms &4.7s &58ms &17.5s \\
\bottomrule
\end{tabular*}
\end{table*}

\begin{figure}[t]
\centering
\includegraphics[width=0.5\textwidth]{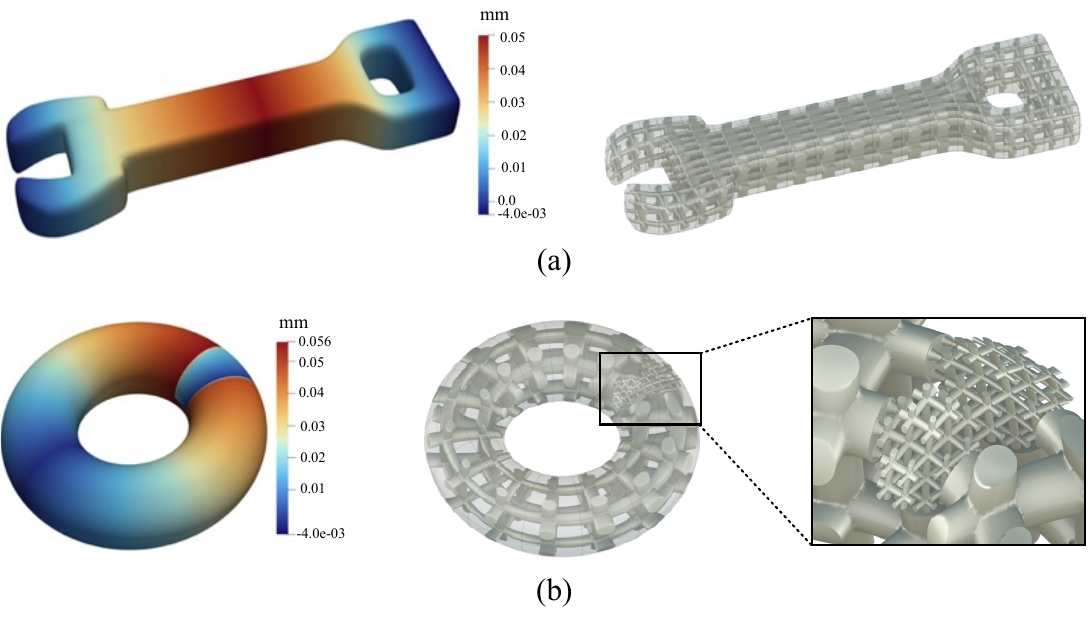}
\caption{Examples of single-scale (a) and multi-scale (b) gradient designs generated using the proposed framework with prescribed offset fields.}
\label{fig:gradient_design}
\end{figure}

\begin{figure}[t]
\centering
\includegraphics[width=0.48\textwidth]{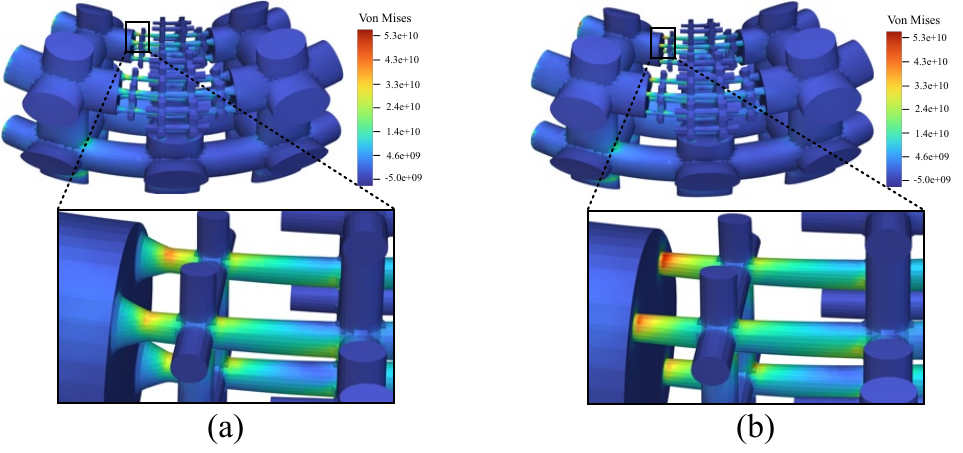}
\caption{Stress distribution at scale interfaces for designs with (a) and without (b) smooth transitions.}
\label{fig:transition_compare}
\end{figure}

\begin{figure*}[t]
\centering
\includegraphics[width=0.9\textwidth]{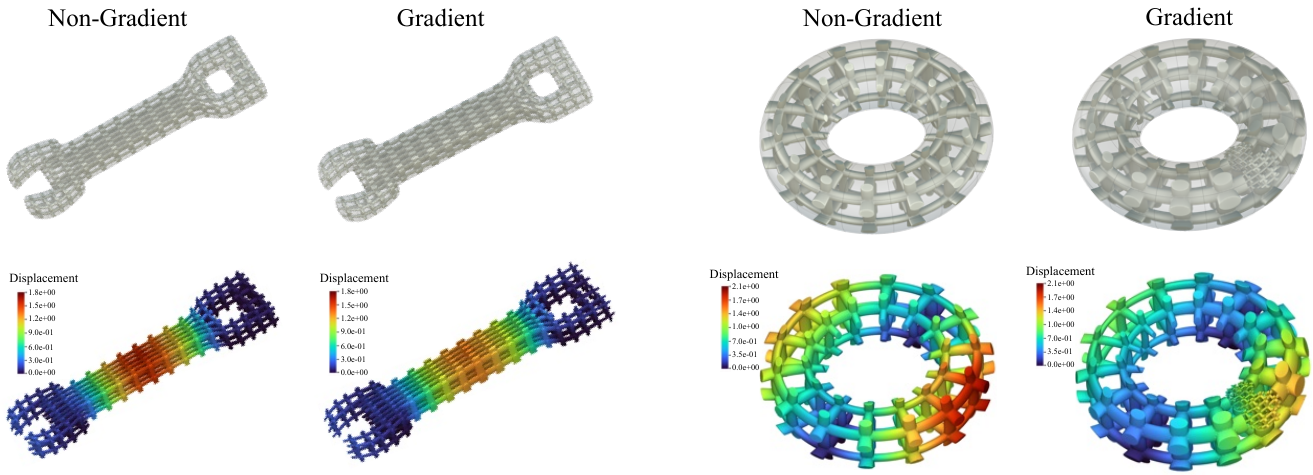}
\caption{Comparison of non-gradient and gradient microstructures under displacement simulation. Top row: geometry of the microstructures; bottom row: corresponding displacement fields.}
\label{fig:gradient_compare}
\end{figure*}

\subsection{Discussions and Limitations}
\label{sec:discussion}

The above examples and comparisons collectively demonstrate that the proposed framework provides an effective solution for the modeling of large-scale and multiscale microstructures. The results shown in Fig.~\ref{fig:topo_adapt} indicate that the proposed framework can represent complex volumetric shapes and generate corresponding microstructures on top of them, thereby enlarging the design space. The comparison in Fig.~\ref{fig:topo_comparison} further demonstrates that our method can effectively maintain the connectivity among adjacent cells in microstructures, which directly validates the advantage of the iso-parametric representation of the modeling framework.

From Fig.~\ref{fig:multiscale}, we can see that fine-scale geometric features can be introduced selectively in user-specified regions while leaving the rest of the domain unchanged. Refined regions can be further refined to achieve nested levels of detail, and the resulting multiscale microstructures exhibit coherent transitions between refined and unrefined areas. Additionally, by leveraging the flexibility of the subdivision rules, our method is capable of representing microstructures with sharp edges, as demonstrated by the zoom-in of the third row in Fig.~\ref{fig:multiscale}. Meanwhile, the statistics in Table~\ref{tab:memory_cells} highlight the representational compactness of the proposed framework, storing only the coarse control mesh together with a lightweight refinement specification and the associated coefficients (including the incremental value to change coefficients to edit geometry). Beyond storage, the on-demand generation example demonstrates that final geometry can be selectively instantiated for user-specified regions and refinement levels, avoiding unnecessary generation elsewhere. This ability makes the framework well-suited for handling large-scale microstructures in practice.

The results in Fig.~\ref{fig:integrity} show that the microstructure geometry can update following the macro-shape deformation. This demonstrates that our method provides a reliable mechanism to achieve geometric association across scales under interactive editing. In contrast, the explicit, geometry-centric modeling workflows in commercial CAD software packages, such as Siemens NX, may produce artifacts such as isolated microstructure fragments (see Fig.~\ref{fig:integrity_compare}). Furthermore, the results shown in Fig.~\ref{fig:gradient_design}-\ref{fig:gradient_compare} further demonstrate the capability of the proposed framework to serve as a flexible tool to achieve advanced microstructure design, e.g., gradient microstructures. Overall, these results fully validate the advances of our method for geometric modeling of large-scale and multiscale microstructures.

While the proposed framework is effective across the presented case studies, the evaluation cost near extraordinary vertices is generally higher than that on regular regions because recursive subdivision is involved in basis evaluation. This trend is consistent with the model-level observation that the Bracket model, which contains many extraordinary vertices and edges, evaluates more slowly than the Cow model, which has none. While irregular cases are consistently slower than regular ones but still remain within a practical range for most resolutions, as shown in Table~\ref{tab:efficiency_statistics}. For instance, at the commonly used resolution of $2^7*2^7*2^7$ per cell, corresponding to over 2,000,000 points to be evaluated, the evaluation of irregular cases takes approximately 4.7 seconds, whereas the time spent on modeling information generation (Info) remains negligible in comparison. Overall, further acceleration for highly irregular cases, such as caching and parallel/GPU-based evaluation, would be a worthwhile direction for future improvement.

Another consideration of the method is that the quality of the input hexahedral control mesh may affect the final modeling of microstructures. When the control mesh contains elements with extreme aspect ratios or noticeable size variations, the resulting parametric mapping may introduce distortions in certain local regions of the generated microstructures. This effect can be more evident near boundary layers, where the boundary-conformal requirement, together with local element irregularities, may lead to more distorted unit cells. Consequently, employing relatively uniform and well-shaped hexahedral control meshes generally leads to improved modeling fidelity.

\section{Conclusions}
\label{sec:conclusion}
A generative and iso-parametric modeling framework is proposed in this paper to support the geometric modeling of large-scale and multiscale microstructures. The framework effectively addresses the scalability and cross-scale geometric integrity challenges encountered in current CAD workflows. It resolves these issues by encoding microstructures as a coarse representation augmented with explicit refinement rules, so that fine details are generated on demand via localized refinement only within queried regions. Meanwhile, an iso-parametric representation is proposed, where the macro-shape and the microstructure are parameterized using the same spline basis family over a shared parametric domain. This unified representation, coupled with the refinement hierarchy, enables macro–meso–micro association of geometric details across scales and exact interface consistency between adjacent cells. A series of case studies and comparisons has also been conducted to validate the method.

While the proposed method has demonstrated effectiveness in the case studies, there is still room for improvement. Most notably, the evaluation efficiency of the current implementation could be further improved, as both the macro volumetric map and the microstructure field require a large number of ExVCC basis evaluations. This overhead becomes more noticeable when encountering extraordinary elements. Further improving evaluation performance, for example, through caching and parallel or GPU-based acceleration, is an important direction for future work.

Additionally, the current implementation includes only a subset of primitives and composition operators, which is sufficient primarily to demonstrate the feasibility of the proposed framework. Further expanding the primitive library, for example, by including complex TPMS-like primitives, and supporting richer field composition operators would significantly enhance the flexibility of multiscale microstructure modeling.

\section*{Acknowledgements}

This work has been funded by the ``Pioneer" and ``Leading Goose" R\&D Program of Zhejiang Province (No. 2024C01103), the National Natural Science Foundation of China (No. 62102355), and the Fundamental Research Funds for the Zhejiang Provincial Universities (No. K20241957, K20250142).

 \bibliographystyle{vancouver}
 \bibliography{cas-refs}

\appendix
\section{Primitive Field Computation}
\label{app:coeff}

This section details how we compute the spline-field representation of a primitive within a single cell. The main task is to determine the field parameters (coefficients) of each primitive in the local volumetric Catmull--Clark (VCC) function space. Our strategy proceeds in two steps:
(i) we convert an analytic tricubic primitive into an equivalent tricubic Bernstein (B\'ezier) form on the cell parametric domain, and
(ii) we map the resulting Bernstein coefficients to VCC coefficients via B\'ezier extraction matrix~\cite{2017_Wei_Bezier-extraction}.

\subsection{Bernstein coefficients for tricubic primitives}
On the cell parametric domain $(u,v,w)\in[0,1]^3$, we assume a typical primitive, e.g., cylinder, is specified by a tricubic polynomial:
\begin{equation}
f(u,v,w)=\sum_{i+j+k\le 3} a_{ijk}\,u^{i}v^{j}w^{k}
\label{eq:quad_poly}
\end{equation}
We rewrite $f$ in the tensor-product tricubic Bernstein basis:
\begin{equation}
f(u,v,w)=\sum_{p=0}^{3}\sum_{q=0}^{3}\sum_{r=0}^{3} C_{pqr}\,B_p^3(u)\,B_q^3(v)\,B_r^3(w)
=\mathbf{C}_e^{T}\mathbf{b}_e(u,v,w)
\label{eq:tri_bernstein}
\end{equation}
where $B_p^3(u)=\binom{3}{p}u^p(1-u)^{3-p}$, $\mathbf{b}_e$ stacks the $64$ tricubic Bernstein basis functions, and $\mathbf{C}_e\in\mathbb{R}^{64}$ stacks the coefficients $\{C_{pqr}\}$.

Specifically, each univariate monomial $u^i$ ($i\le 3$) admits a representation in the cubic Bernstein basis:
\begin{equation}
u^{i}=\sum_{p=0}^{3}\alpha^{(i)}_{p}\,B_p^3(u), \qquad i=0,1,2,3
\label{eq:mono_bernstein}
\end{equation}
where $\alpha^{(i)}_{p}$ are fixed constants, and analogously for $v^j$ and $w^k$. Substituting Eq.~\eqref{eq:mono_bernstein} into Eq.~\eqref{eq:quad_poly} and collecting terms in $B_p^3(u)B_q^3(v)B_r^3(w)$ yields:
\begin{equation}
C_{pqr}=\sum_{i+j+k\le 3} a_{ijk}\,\alpha^{(i)}_{p}\,\alpha^{(j)}_{q}\,\alpha^{(k)}_{r}
\label{eq:Cpqr_from_aijk}
\end{equation}
Equations~\eqref{eq:mono_bernstein}--\eqref{eq:Cpqr_from_aijk} therefore provide a direct analytic mapping from $\{a_{ijk}\}$ to $\mathbf{C}_e$.

\subsection{B\'ezier extraction-based coefficient conversion to VCC}
We next convert the tricubic Bernstein representation into a local ``VCC basis functions + coefficients'' form. Following the B\'ezier extraction framework~\cite{2017_Wei_Bezier-extraction}, we use an element-wise extraction matrix $\mathbf{M}$ to relate the local VCC basis functions and the tricubic Bernstein basis on the reference parametric cell:
\begin{equation}
\mathbf{B}(u,v,w)\approx \mathbf{M}^{T}\mathbf{b}_e(u,v,w)
\label{eq:bezier_extraction_approx}
\end{equation}
For regular elements, the relation becomes exact; for elements incident to extraordinary configurations, we still employ the same extraction matrix as a practical approximation.

Given the Bernstein form
\begin{equation}
f(u,v,w)=\mathbf{C}_e^{T}\mathbf{b}_e(u,v,w)
\label{eq:primitive_bernstein}
\end{equation}
we seek a VCC representation
\begin{equation}
f(u,v,w)=\mathbf{C}_{B}^{T}\mathbf{B}(u,v,w)
\label{eq:primitive_ccc}
\end{equation}
Substituting Eq.~\eqref{eq:bezier_extraction_approx} into Eq.~\eqref{eq:primitive_ccc} leads to an (approximate) coefficient relation
\begin{equation}
\mathbf{M}\mathbf{C}_{B}\approx \mathbf{C}_e
\label{eq:coef_conversion_approx}
\end{equation}
We compute $\mathbf{C}_{B}$ via a least-squares solve
\begin{equation}
\mathbf{C}_{B}=\left(\mathbf{M}^{T}\mathbf{M}+\lambda \mathbf{I}\right)^{-1}\mathbf{M}^{T}\mathbf{C}_e
\label{eq:coef_conversion_tikh}
\end{equation}
where $\lambda>0$ is a small regularization parameter for numerical stability.

For regular elements, Eq.~\eqref{eq:coef_conversion_tikh} recovers an accurate cell-wise VCC representation of the primitive. For extraordinary elements, the extraction relation in Eq.~\eqref{eq:bezier_extraction_approx} is used as an approximation; consequently, the converted coefficients may slightly deviate from an ideal analytic primitive shape. In practice, this small deviation has a negligible impact on the microstructure modeling, as the examples in this study demonstrate. If desired, post-processing can be applied to enforce the intended sharing of boundary degrees of freedom across adjacent unit cells, further enhancing geometric consistency.





\end{document}